\pdfoutput=1
\documentclass[conference,compsoc]{IEEEtran}

\ifCLASSOPTIONcompsoc
  \usepackage[nocompress]{cite}
\else
  \usepackage{cite}
\fi

\ifCLASSINFOpdf
\else
\fi

\usepackage{amsmath,amssymb,amsthm}

\usepackage{amsmath,amssymb}
\usepackage{array}
\usepackage{tikz}
\usetikzlibrary{positioning,arrows.meta,calc}
\usepackage{algorithm}
\usepackage{algpseudocode}

\usepackage{booktabs}
\usepackage[hidelinks]{hyperref}
\IEEEoverridecommandlockouts
\newtheorem{definition}{Definition}
\newtheorem{proposition}{Proposition}
\newtheorem{theorem}{Theorem}

\begin{document}
\title{What the Detector Can See:\\ Evaluating CPS Anomaly Detectors Independently of the Decision Rule%
\thanks{Code and data are available at \url{https://zenodo.org/records/20653309}.}}

\author{\IEEEauthorblockN{Peiran Shi, Jian Xiang, Xiang Zhang, and Chenglong Fu}
\IEEEauthorblockA{University of North Carolina at Charlotte\\
\{pshi, jian.xiang, xzhang46, chenglong.fu\}@charlotte.edu}}

\maketitle

\begin{abstract}
Anomaly detectors are often the last line of defense for cyber-physical systems (CPS). But detectors built in very different ways, from deep neural networks to invariant templates, are usually compared using precision, recall, or F1 at a single operating point. These scores mix two separate things: how well the detector represents the physical process, and how well its alarm threshold is set. We therefore treat a CPS anomaly detector as a two-stage pipeline: Stage 1 maps observations to residuals, and Stage 2 maps residuals to alarms. Instead of scoring only the final alarms, we evaluate Stage 1 directly using normalized residual energy, which has an exact connection to the Kullback--Leibler divergence from the trained-normal reference distribution. Because it does not depend on a specific alarm rule, it can separately measure attack separation, stability across the train--test gap, and the compactness with which a detector encodes the plant. Without any per-detector tuning, we apply this evaluation to five detectors---GDN, FuSAGNet, TranAD, NSIBF, and GeCo---across three CPS benchmarks: SWaT, WADI, and HAI. Although the detectors have similar ROC-AUC values on SWaT, their performance differs by more than an order of magnitude at a common false-alarm rate. Rankings also change across testbeds: TranAD ranks first on HAI but last on SWaT, while NSIBF ranks first on WADI but last on HAI. On WADI, localized attacks can evade detectors that pool evidence across all channels, helping explain why NSIBF outperforms methods that do well on other benchmarks. These results show that detection failure can come from different sources: a weak representation, poor threshold calibration, or an attack with little physical effect. A decision-rule-free analysis helps separate these causes.
\end{abstract}

\IEEEpeerreviewmaketitle

\section{Introduction}\label{sec:introduction}

Anomaly detectors are often the last line of defense for cyber-physical
systems (CPS)~\cite{giraldo2018survey,cardenas2011attacks,urbina2016limiting}.
In water-treatment plants, power grids, and industrial production lines,
networked sensors and actuators connect computation to the physical process.
When an adversary bypasses preventive controls and manipulates this telemetry,
the detector is often the component that decides whether the change is treated
as normal process variation or as unsafe behavior. Evaluating such a detector
is therefore not only a benchmarking exercise. It is also a way to ask how much
of an attack remains visible once the attacker has reached the process.
\emph{To understand what a CPS anomaly detector can see before it raises
alarms, this paper evaluates the attack evidence in its residual representation
on a common information--energy scale.}

\begin{figure}[t]
\centering
\includegraphics[width=0.9\linewidth]{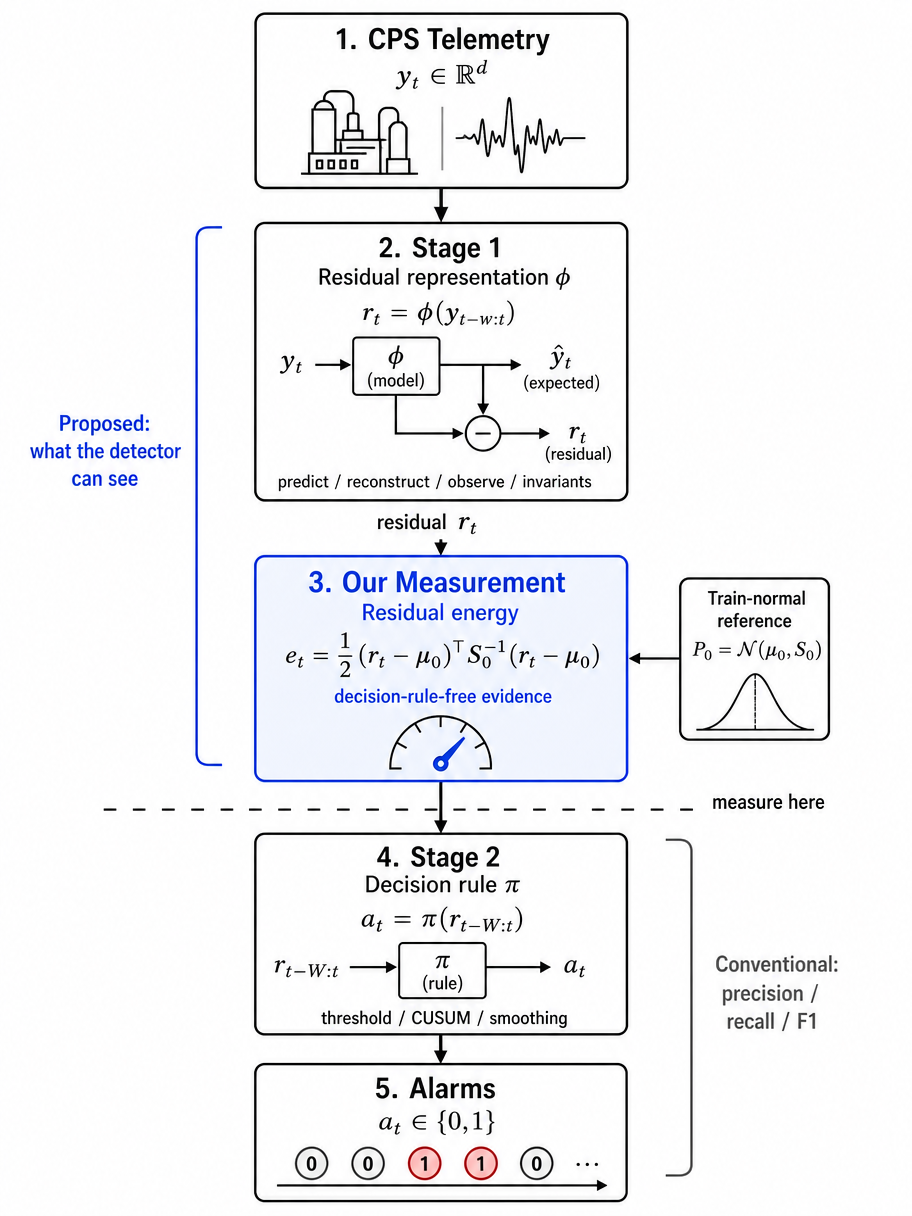}
\caption{Two-stage CPS anomaly-detection pipeline. Our framework measures residual energy after Stage~1 and before the Stage~2 alarm rule, separating residual evidence from alarm-level metrics.}
\label{fig:teaser}
\end{figure}

Most CPS anomaly detectors~\cite{deng2021gdn,han2022fusagnet,tuli2022tranad,feng2021nsibf,wolsing2025geco}
are still compared through alarm-level precision, recall, and F1 at a chosen
operating point. These scores are useful for describing one deployed alarm
stream, but they mix together several different questions. Did the detector's
residual representation contain evidence of the attack? Did the normal residual
distribution move between training and test? Did the alarm rule convert the
available evidence into alarms? Because these effects are reported only after
they are combined, a low precision, recall, or F1 score cannot say whether a
missed attack reflects weak residual evidence, train--test drift, or a poorly
placed threshold. This difficulty is especially sharp for heterogeneous
detectors, such as graph predictors, transformer reconstructors, state
observers, and invariant checkers, because their residuals live in different
spaces and are often paired with separately tuned decision rules. Recent work
also shows that threshold-tied time-series evaluation can reorder detectors and
even make trivial baselines appear competitive~\cite{kim2022rigorous}. This
suggests that the operating point can dominate the comparison instead of the
representation itself.

A simple symptom appears on the SWaT testbed~\cite{mathur2016swat}. Detectors with broadly similar point-level ranking can behave very differently when each energy stream is cut by the same train-normal p99 calibration rule. This rule is not proposed as the best deployment policy, and it does not guarantee the same realized false-positive rate on attack-free test data. It is a controlled Stage-2 probe: it shows how sharply the final alarm stream can change depending on where the decision rule cuts the residual evidence. We return to this symptom in Finding~1
of Section~\ref{sec:results}. Here it
motivates the central evaluation question of the paper: before choosing an
alarm rule, what evidence does the residual representation already contain?

To answer this question, we model a CPS anomaly detector as a two-stage
pipeline. \emph{Stage~1} maps a window of recent observations to a residual
representation: the gap between observed behavior and the behavior expected
under normal operation. \emph{Stage~2} maps one or more residuals to alarms,
for example by thresholding, smoothing, or sequential testing. This separation
is not new. It is the applied form of a classical idea in detection theory, in
which a statistic is computed from observations and a decision rule then acts on
that statistic~\cite{kay1998detection,tartakovsky2014sequential}. It is also
natural for CPS detection, where state observers, prediction and reconstruction
models, graph-based detectors, and invariant checkers all produce residual-like
signals before raising alarms~\cite{mehra1971innovations,willsky1976survey,giraldo2018survey}.
Standard evaluation scores the detector only after these two stages are joined.
We instead measure Stage~1 directly. Whether an attack \emph{can} be seen is a
Stage-1 question; whether the deployed system \emph{does} alarm is a Stage-2
question.

Our Stage-1 measurement is the \emph{normalized residual energy}: the squared
Mahalanobis distance of a residual from the trained-normal residual reference,
scaled by one half. This quadratic statistic is familiar from residual-based
fault detection and chi-squared innovation tests~\cite{kay1998detection,giraldo2018survey,murguia2019model}.
Our use of it is different. We do not use it first as a thresholded alarm score.
We use it as a common measurement of residual evidence before the alarm rule is
chosen. Its expectation has an exact connection to the Kullback--Leibler (KL)
divergence from the trained-normal reference under a moment-matched Gaussian
view~\cite{cover2006elements}. When the residual covariance is stable and the
main change is a mean shift, centered energy equals the Gaussian KL divergence;
when covariance also changes, the identity includes an explicit
covariance-volume correction. This gives one information--energy scale for
reading attack separation, normal train--test drift, and the structure of the
residual space. KL-based views of residual and innovation sequences are also
well established in CPS state estimation, where they are used to reason about
detectability and stealthiness~\cite{bai2017data,guo2018worst,kung2017performance}.

This scale helps explain why a detector misses an attack without reducing the
answer to one alarm-level number. A miss may occur because the residual
representation contains little attack evidence. It may occur because normal
behavior in the test recording has drifted away from the trained-normal
reference. It may occur because the attack affects only a small part of a large
plant and its signal is diluted by full-channel scoring. It may also occur
because the evidence is present, but the Stage-2 rule does not extract it. In
some cases, the attack may have little physical effect in the measured
telemetry, or may be blocked by the plant before it changes the observed
process. These cases look similar when we only inspect final precision, recall,
and F1. On the information--energy scale, they become different explanations of
what the residual representation can and cannot make visible.

We apply this view uniformly to five detectors---GDN~\cite{deng2021gdn},
FuSAGNet~\cite{han2022fusagnet}, TranAD~\cite{tuli2022tranad},
NSIBF~\cite{feng2021nsibf}, and GeCo~\cite{wolsing2025geco}---across three CPS
benchmarks: SWaT~\cite{mathur2016swat}, WADI~\cite{ahmed2017wadi}, and
HAI~\cite{shin2021hai}. The study answers four diagnostic research questions: \textit{\textbf{RQ1:}} does the Stage-1 residual energy contain attack evidence before the alarm rule? \textit{\textbf{RQ2:}} is the train-normal residual reference stable across the train--test gap? \textit{\textbf{RQ3:}} is the attack evidence expressed in the residual space being scored, or diluted by full-channel aggregation? \textit{\textbf{RQ4:}} when evidence exists, does a controlled Stage-2 rule extract it? The results show that these questions lead to different explanations of failure: weak residual evidence, reference drift, dimensional dilution, Stage-2 calibration failure, and limited measured physical effect. Thus, the framework does not replace deployment-specific alarm rules; it explains what the detector representation makes visible before those rules act.

\smallskip
\noindent\textbf{Contributions.}
This paper makes four contributions aligned with the diagnostic questions above:
\begin{list}{$\bullet$}{%
\setlength{\leftmargin}{1.2em}%
\setlength{\labelwidth}{0.8em}%
\setlength{\labelsep}{0.4em}%
\setlength{\itemsep}{0.35em}%
\setlength{\topsep}{0.35em}%
}
\item \textbf{A decision-rule-independent evaluation object for CPS anomaly detectors.}
We formalize a fixed detector as a two-stage pipeline: Stage~1 maps telemetry
into an intermediate residual representation, and Stage~2 maps that
representation into alarms. Our contribution is not the decomposition itself,
but making the Stage~1 residual representation the object of evaluation. This
lets us ask whether attack evidence is present before any threshold, CUSUM
rule, smoothing policy, or point-adjustment convention is applied
(Sections~\ref{sec:problem} and~\ref{sec:framework}).

\item \textbf{A residual information--energy scale for heterogeneous Stage~1 representations.}
We define normalized residual energy with respect to each detector's
train-normal residual reference. Under a moment-matched Gaussian view, its
expectation decomposes into a KL term plus a covariance-volume correction; when
the residual covariance is stable, centered energy equals the Gaussian KL. This
gives a common decision-rule-independent scale for reading attack evidence,
train--test movement, and residual-space structure across heterogeneous
detectors (Section~\ref{sec:framework}).

\item \textbf{Failure-attribution diagnostics for evidence, drift, residual space, and Stage~2 extraction.}
Using the same information--energy stream, we derive point-ranking,
attack-window, drift-corrected, compactness, and controlled p99-probe metrics.
Together, these diagnostics distinguish weak Stage~1 evidence, train--test
reference instability, residual-space dilution, and Stage~2 extraction failure.
They also flag cases where a benchmark attack has limited measured physical
effect in the observed telemetry, so that similar missed alarms can be assigned
to different causes (Section~\ref{sec:metrics}).

\item \textbf{A uniform cross-detector study on CPS benchmarks.}
We apply the framework to five detectors---GDN, FuSAGNet, TranAD, NSIBF, and
GeCo---across SWaT, WADI, and HAI using the same Stage~1 analysis and a
controlled Stage~2 probe. The study shows that similar ROC-AUC can hide
order-of-magnitude differences in F1@p99, that localized WADI attacks can be
exposed or diluted depending on the residual space, that detector rankings do
not transfer reliably across testbeds, and that similar missed alarms can point
to different repairs: representation redesign, reference stabilization, or
Stage~2 recalibration (Sections~\ref{sec:setup} and~\ref{sec:results}).
\end{list}

\section{Background}\label{sec:background}

\subsection{Benchmark Evaluation Setting}\label{sec:bg_benchmark}

Public CPS and industrial-control testbeds provide the common empirical setting
for this work. SWaT, WADI, and HAI differ in scale, process physics, and attack
design, but they share the same high-level protocol: a detector is trained on
attack-free operation and evaluated on a separate recording with labeled attack
intervals~\cite{mathur2016swat,ahmed2017wadi,shin2021hai}.

This protocol fixes the data setting, but it does not by itself define the
evaluation object. A benchmark label tells us when an attack was launched or
active. It does not say which physical variables should change, which residual
space should expose that change, or which alarm rule should convert a score into
binary decisions. This matters because CPS attacks are interval events with
different physical effects: some affect many channels, while others are local,
blocked by process logic, or partly hidden by normal variation. Thus, the
benchmark provides the trace and labels, but the question of what a detector can
see before raising alarms still has to be made explicit.
Industrial intrusion-detection evaluation also remains fragmented across
datasets, preprocessing choices, operating points, and metrics
\cite{lamberts2023sok}. Our framework keeps the standard benchmark task:
normal-only training and attack-labeled testing. The change is that we evaluate
the detector's intermediate residual evidence before it is turned into the final
alarm stream.

\subsection{Stage-1 Residuals and Stage-2 Alarms}\label{sec:bg_two_stage}

Many anomaly detectors first learn normal behavior from attack-free data and
then assign a deviation or anomaly score to new observations
\cite{chandola2009anomaly,darban2024deep}. In CPS detectors, this intermediate
quantity often has a residual-like meaning: forecasting models produce
prediction errors, reconstruction models produce reconstruction errors,
observers produce innovations or residuals, and invariant methods produce
violation scores. These objects are not mathematically identical, but they play
the same pipeline role: they are the evidence available before an alarm rule is
applied. We therefore use \emph{residual representation} as an umbrella term for
the pre-alarm prediction errors, reconstruction errors, observer innovations,
invariant violations, or residual score vectors exposed by a detector.

This terminology is consistent with model-based fault detection, where a system
first generates a residual or innovation and then applies a separate rule to
decide whether that residual is abnormal
\cite{mehra1971innovations,willsky1976survey}. Modern data-driven detectors may
replace the physical model with a learned normal-behavior model, but the same
distinction remains: Stage~1 produces an evidence signal, and Stage~2 decides
how to act on that signal.

The evidence signal is not yet an alarm stream. It must be mapped to binary
alarms by a threshold, smoothing rule, sequential test, or post-processing
convention. Classical detection theory separates the statistic computed from
the observations from the decision rule applied to that statistic
\cite{kay1998detection,tartakovsky2014sequential}. CPS anomaly-detection papers
often report only the combined alarm stream, so residual evidence and
decision-rule calibration are observed only after they have been mixed. A final
precision, recall, or F1 score therefore answers whether a particular Stage-2
policy fired; it does not by itself answer whether the attack was visible in
Stage~1.

\subsection{Residual Energy and Information Measures}\label{sec:bg_energy_info}

Residual-based detection often scores a residual by quadratic energy. For a
normal residual reference with mean $\mu_0$ and covariance $S_0$, the squared
Mahalanobis distance $(r_t-\mu_0)^\top S_0^{-1}(r_t-\mu_0)$ measures how far the
residual is from normal behavior. Under a Gaussian innovation model, this
statistic has a chi-squared calibration and can be used by tests such as a
chi-squared rule or CUSUM
\cite{kay1998detection,murguia2019model,tartakovsky2014sequential}. Information
measures provide a complementary reading: KL divergence measures
distributional distinguishability and is standard in information theory and
statistical detection~\cite{cover2006elements,kay1998detection}.
Our use of these classical tools is deliberately narrow. We do not propose a
new chi-squared detector, CUSUM policy, or KL detector. Instead,
Section~\ref{sec:framework} uses normalized residual energy as a Stage-1
measurement of a fixed residual representation before a Stage-2 rule is chosen.
The expected energy has a KL-linked interpretation under a moment-matched
Gaussian view, which lets the same residual-energy scale describe attack
separation, train--test drift, and residual-space structure. Section~\ref{sec:metrics}
then turns this energy stream into the diagnostics used in the experiments.

\section{Problem Statement}\label{sec:problem}

\subsection{System and Detector Model}\label{sec:sysmodel}

We study a CPS that is monitored at discrete time
steps. At each step $t$, the system reports sensor and actuator measurements
$y_t \in \mathbb{R}^d$~\cite{giraldo2018survey}. A detector is trained on attack-free data and evaluated on a separate trace containing labeled
attack intervals. We do not change this benchmark task; we change what is
measured before the detector's residuals are converted into alarms.

To make this question precise, we write a detector as two stages. Stage~1 maps
observations to a residual representation, and Stage~2 maps residuals to final
alarms. Formally, a detector is the composition
$\mathcal{D}=\pi\circ\phi$, where $\phi$ is the Stage-1 map and $\pi$ is the
Stage-2 map. This two-stage view is not new. It is the applied form of a classical idea in
detection theory: the data are first transformed into a statistic, and a
decision rule then acts on that statistic~\cite{kay1998detection,tartakovsky2014sequential}. The same split also appears in
model-based fault detection, where a model first produces a residual and a
separate rule decides whether the residual is abnormal
\cite{mehra1971innovations,willsky1976survey}. We use this decomposition only
as a lens. Our contribution is not the decomposition itself, but the evaluation
it enables.

\begin{definition}[Stage 1: residual representation]\label{def:stage1}
A Stage-1 representation is a map
\[
  \phi : \{y_{t-w:t}\} \longrightarrow r_t \in \mathbb{R}^p,
\]
that turns a window of recent observations into a residual vector $r_t$. The
residual records the gap between what was observed and what the model expected
under normal operation. Stage~1 is trained on normal data only.
\end{definition}

\begin{definition}[Stage 2: decision rule]\label{def:stage2}
A Stage-2 decision rule is a map
\[
  \pi : \{r_{t-W:t}\} \longrightarrow a_t \in \{0,1\},
\]
that turns one or more residuals into an alarm $a_t$. Fixed thresholds,
point-adjustment rules, and sequential tests such as CUSUM are all Stage-2
choices~\cite{tartakovsky2014sequential}.
\end{definition}

Most modern CPS detectors are designed around their Stage-1 model. The Stage-2
rule is usually chosen afterward, and different papers often choose it in
different ways. This makes the final alarm stream a mixture of representation
quality and decision-rule calibration.

\subsection{Evaluation Setting and Attacker Scope}\label{sec:evalsetting}

Because this paper proposes an evaluation method rather than a new detector, we
do not introduce a new attacker model. We evaluate fixed, honestly trained
detectors on the labeled attacks provided by the benchmarks. The attacker scope
therefore follows the standard CPS anomaly-detection setting
\cite{cardenas2011attacks,urbina2016limiting,giraldo2018survey}.

\noindent \textbf{Attackers' goal and capability.}
The attacker aims to disrupt the plant, drive it toward an unsafe state, or
force an attacker-chosen physical effect. The attack is successful from the
viewpoint of detection only if it remains hard to distinguish from normal
operation. The attacker is assumed to have bypassed network and authentication controls
and can manipulate a subset of the sensor or actuator signals observed by the
detector. This covers false data injection, sensor spoofing, replay, actuator
or set-point manipulation, and coordinated changes across multiple channels
\cite{cardenas2011attacks,giraldo2018survey}. The attacks in SWaT, WADI, and
HAI are concrete benchmark instances of these capabilities
\cite{mathur2016swat,ahmed2017wadi,shin2021hai}.

\noindent \textbf{Detector and data assumptions.}
Stage~1 is trained on clean, attack-free data, and the detector model itself is
not compromised. Evaluation is offline and uses ground-truth attack labels. We
do not require that the attacker know the detector architecture. Instead, we
measure the effect that each benchmark attack leaves in each detector's
residual representation.

\noindent \textbf{Out of scope.}
We do not study training-data poisoning or model tampering
\cite{kravchik2021poisoning}. We also do not study adaptive white-box evasion
that repeatedly queries or optimizes against a specific detector
\cite{erba2020concealment}. These are important problems, but they are not the
object of this paper. Our object is a fixed Stage-1 representation and the
question of what attack evidence it contains before a Stage-2 decision rule is
chosen.

\noindent \textbf{Scope of claims.}
A large residual-energy score is not a defense guarantee, and a small score
does not mean an attack is harmless---only that this representation exposes
little evidence of it. If an attack barely moves the residual representation,
no Stage-2 rule can recover strong evidence from it; if strong evidence is
present but the alarm stream fails, the bottleneck is the Stage-2 rule.

\subsection{Evaluation Problem and Research Questions}\label{sec:evalquestions}

The common way to compare CPS detectors is to report precision, recall, and F1
at one operating point~\cite{deng2021gdn,han2022fusagnet,tuli2022tranad,feng2021nsibf}.
These scores are useful for describing one alarm stream, but they do not isolate
where a missed attack was lost. A low F1 value can arise from several different
sources that are mixed together once Stage~1 and Stage~2 are joined.

\noindent\textbf{Stage-1 attack evidence.}
The detector's residual representation may or may not contain evidence of the
attack before any alarm rule is applied. If attack residuals do not separate
from normal residuals in the Stage-1 energy stream, then no threshold on that
stream can easily recover strong alarms.

\noindent\textbf{Train--test reference stability.}
Industrial recordings are not stationary. For example, SWaT contains a normal
recording followed by a later attack recording~\cite{mathur2016swat}, and the
plant may not behave identically across the two phases. Sensor drift, control
retuning, operating-point changes, and set-point changes can shift the normal
residual distribution between training and test. This is related to the broader
problem of concept drift~\cite{gama2014concept}, but here it appears inside a
specific detector's residual representation.

\noindent\textbf{Residual space and scale.}
Different detectors expose different residual spaces: prediction errors,
reconstruction errors, observer innovations, and invariant violations are not on
the same scale and do not necessarily express the same physical evidence. A
localized attack may be visible in a sensor subspace but diluted when evidence
is pooled across a full observation vector. Thus, comparing heterogeneous
detectors requires normalizing each residual space while recognizing that different spaces may encode different evidence.

\noindent\textbf{Stage-2 calibration.}
Even when Stage~1 contains attack evidence, the final alarm stream depends on
how Stage~2 cuts, smooths, or accumulates that evidence. A well-placed threshold
and a poorly placed threshold can give very different F1 scores on the same
energy stream. Cross-paper comparisons rarely use the same thresholding rule,
and recent work has shown that threshold-tied scoring can reorder detectors or
make weak scores appear competitive~\cite{kim2022rigorous}. Thus, an F1
comparison can rank the Stage-2 rule rather than the Stage-1 representation.

These problems motivate four evaluation questions. Each question
isolates one factor that final precision, recall, and F1 usually mix.

\noindent\textbf{RQ1: Is attack evidence present before the alarm rule?}
Section~\ref{sec:framework} defines normalized residual energy as a Stage-1
measurement, and Sections~\ref{sec:metrics_rank} and~\ref{sec:metrics_window} turn
it into point-level and attack-window evidence metrics, including ROC-AUC,
pAUC@1\%, \(D_{\mathrm{FULL}}\), and \(P_{\mathrm{FULL}}\). Section~\ref{sec:res_within}
then shows when residual representations contain ranking evidence before any
detector-specific decision rule is chosen.

\noindent\textbf{RQ2: Is the normal residual reference stable across train and test?}
Section~\ref{sec:metrics_drift} defines train--test drift, drift-corrected
evidence, and SDR on the same residual-energy scale. Section~\ref{sec:protocol}
describes how these quantities are computed under one fixed protocol, and
Section~\ref{sec:res_within} shows that detectors with similar attack ranking
can have very different residual drift. This separates weak attack evidence
from instability of the normal reference used to score the test trace.

\noindent\textbf{RQ3: Is the evidence expressed in the residual space being scored?}
Section~\ref{sec:reference} defines the train-normal reference used to normalize
each residual space, Section~\ref{sec:metrics_compactness} defines compactness
as a representation-level diagnostic, and Section~\ref{sec:detectors} describes
the residual spaces of the evaluated detectors. Sections~\ref{sec:res_dataset}
and~\ref{sec:res_cross} use these quantities to ask whether attack evidence is
exposed or diluted by the residual space, especially for localized WADI attacks.

\noindent\textbf{RQ4: Does the Stage-2 rule extract the available Stage-1 evidence?}
Section~\ref{sec:metrics_stage2} defines F1@p99 and p99 coverage as controlled
Stage-2 probes, not as the main detector ranking. These metrics ask whether a
fixed train-normal p99 rule recovers evidence that is already present in the
residual-energy stream. Sections~\ref{sec:res_within} and~\ref{sec:res_dataset}
use this probe to distinguish representation failures from threshold-extraction
failures.

The rest of the paper builds this evaluation in order. Section~\ref{sec:framework}
defines the residual information-energy framework. Section~\ref{sec:metrics}
defines the reported metrics. Section~\ref{sec:setup} describes the fixed
evaluation protocol. Section~\ref{sec:results} then answers the four research
questions on five detectors and three CPS benchmarks.

\section{The Residual Information--Energy Framework}\label{sec:framework}

\begin{figure*}[t]
  \centering
  \includegraphics[width=\textwidth]{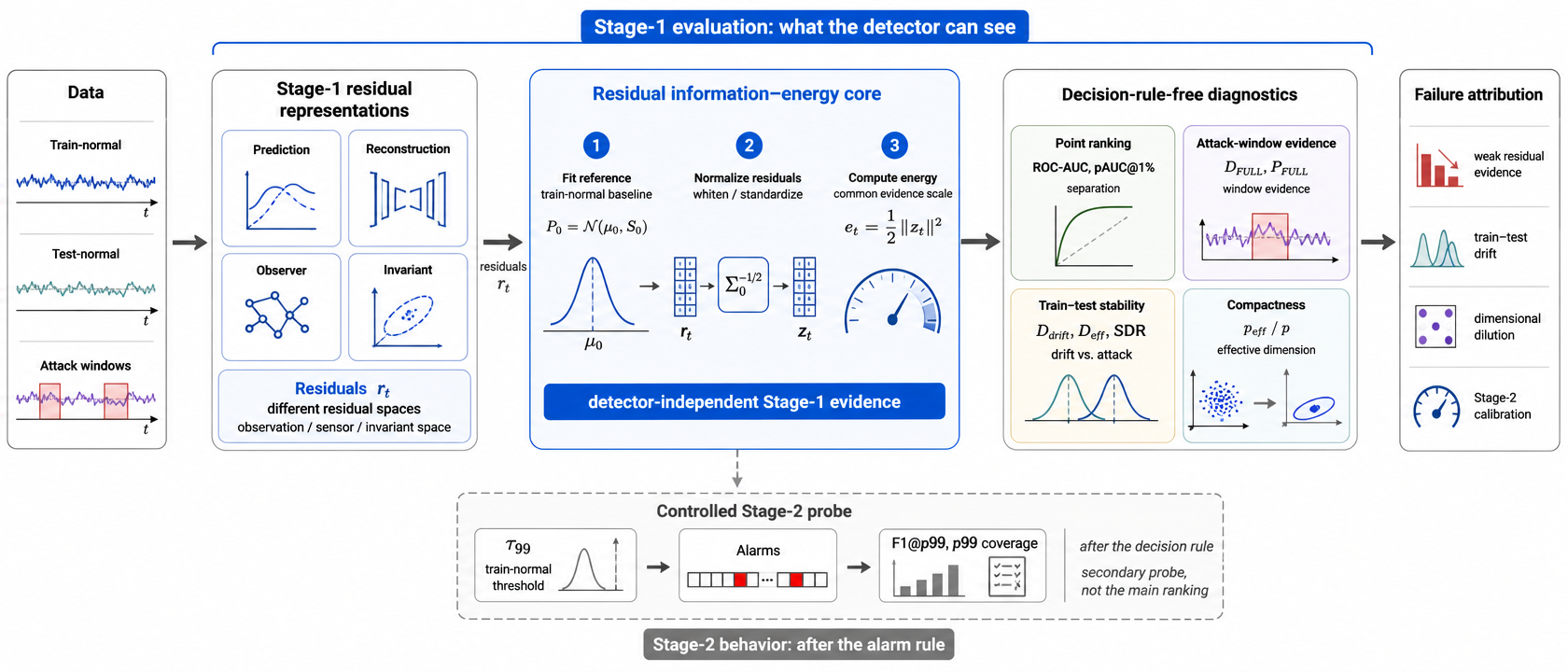}
  \caption{Overview of our Residual information--energy framework. Heterogeneous Stage~1 residuals are mapped to a common energy scale for decision-rule-free diagnostics and failure attribution, while Stage~2 alarm behavior is evaluated separately as a controlled probe.}
  \label{fig:overview}
  
\end{figure*}

Section~\ref{sec:evalquestions} argued that a Stage-1 measurement should be
independent of the Stage-2 decision rule. This section gives that measurement.
We take the residual stream produced by a fixed detector representation and score
it against the detector's own train-normal reference. The result is a normalized
residual energy: a decision-rule-free quantity that is computed before any alarm
threshold, CUSUM rule, or point adjustment is chosen. Section~\ref{sec:metrics}
will turn this energy into the concrete metrics used in the experiments; here we
only define the energy, state the information identity behind it, and explain how
the normal reference is estimated.

\subsection{Normalized Residual Energy}\label{sec:energy}

Let $\phi$ be a fixed Stage-1 representation, as in
Definition~\ref{def:stage1}. It maps an observation history to a residual vector
$r_t\in\mathbb{R}^p$. We summarize normal operation by a reference distribution
$P_0$, estimated from attack-free training residuals and characterized by a mean
$\mu_0$ and covariance $S_0$. The central quantity of the framework measures how
far one residual lies from that reference.

\begin{definition}[Normalized residual energy]\label{def:energy}
The normalized residual energy at time $t$ is
\begin{equation}\label{eq:energy}
  e_t \;=\; \tfrac{1}{2}\,(r_t-\mu_0)^{\top} S_0^{-1} (r_t-\mu_0).
\end{equation}
\end{definition}

Geometrically, $e_t=\tfrac12\lVert z_t\rVert^2$, where
$z_t=S_0^{-1/2}(r_t-\mu_0)$ is the whitened residual. Thus the energy is the
squared Mahalanobis distance~\cite{mahalanobis1936distance}, scaled by one half.
This statistic is standard in model-based detection and residual testing: under a
Gaussian normal innovation model, the squared whitened residual follows a
chi-squared law~\cite{mehra1971innovations,murguia2019model,kay1998detection}.
We do not claim that this quadratic score is new. The contribution is to use it
as the common object of evaluation across detector families, before the residual
is converted into alarms.

\begin{proposition}[Ideal normal calibration]\label{prop:baseline}
Under the normal hypothesis $H_0$, if $r_t\sim\mathcal{N}(\mu_0,S_0)$, then
$2e_t\sim\chi^2_p$. Consequently,
$\mathbb{E}[e_t\mid H_0]=p/2$ and
$\mathrm{Var}(e_t\mid H_0)=p/2$.
\end{proposition}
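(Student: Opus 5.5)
The plan is to whiten the residual, which reduces the claim to standard facts about sums of squared independent standard normals. Assume $S_0$ is symmetric positive definite, as Definition~\ref{def:energy} already requires since it uses $S_0^{-1}$. Then $S_0$ has a unique symmetric positive-definite square root $S_0^{1/2}$, with inverse $S_0^{-1/2}$. Set $z_t=S_0^{-1/2}(r_t-\mu_0)$, as in the geometric remark after Definition~\ref{def:energy}.

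Under $H_0$, $r_t-\mu_0\sim\mathcal{N}(0,S_0)$. An affine image of a Gaussian vector is Gaussian, so $z_t$ is Gaussian with mean $0$ and covariance
\[
S_0^{-1/2}S_0S_0^{-1/2}=I_p .
\]
Since $z_t$ is jointly Gaussian with identity covariance, its components $z_{t,1},\dots,z_{t,p}$ are i.i.d.\ $\mathcal{N}(0,1)$. Moreover
\[
(r_t-\mu_0)^\top S_0^{-1}(r_t-\mu_0)=z_t^\top z_t=\sum_{i=1}^p z_{t,i}^2 ,
\]
so $2e_t=\sum_i z_{t,i}^2$. By definition of the chi-squared law, this gives $2e_t\sim\chi^2_p$.

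For the moments, use $\mathbb{E}[\chi^2_p]=p$ and $\mathrm{Var}(\chi^2_p)=2p$. If needed, these follow from independence together with $\mathbb{E}[Z^2]=1$ and $\mathbb{E}[Z^4]=3$ for $Z\sim\mathcal{N}(0,1)$, which give $\mathrm{Var}(Z^2)=2$. Scaling by $\tfrac12$ then yields
\[
\mathbb{E}[e_t\mid H_0]=\tfrac{p}{2},\qquad
\mathrm{Var}(e_t\mid H_0)=\tfrac14\cdot 2p=\tfrac{p}{2}.
\]

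The only step that needs any care is the whitening: it relies on $S_0$ being positive definite and on the fact that uncorrelated jointly Gaussian components are independent. Everything after that is a textbook computation.
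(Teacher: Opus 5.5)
Your proof is correct and follows the same whitening argument as the paper: set $z_t=S_0^{-1/2}(r_t-\mu_0)\sim\mathcal{N}(0,I_p)$, so that $2e_t=\lVert z_t\rVert^2\sim\chi^2_p$, and then rescale the chi-squared moments. Your version is slightly more careful than the paper's ``dividing by two'' remark, because it states explicitly that the variance scales by $\tfrac14$, which gives $2p/4=p/2$.
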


The proof is the standard whitening argument and appears in
Appendix~\ref{app:proofs}. Proposition~\ref{prop:baseline} gives the ideal
calibration. In experiments, we anchor energy excesses at the measured
train-normal baseline, which leads to the effective dimension
$p_{\mathrm{eff}}$ in Section~\ref{sec:metrics_compactness}.

\subsection{The Energy--KL Identity}\label{sec:identity}

The reason the same energy can support several diagnostics is that its
expectation has an exact information-theoretic decomposition. The distribution
being scored need not be an attack distribution. It can be the train-normal
distribution, the attack-free test distribution, or the residual distribution on
an attack window.

\begin{theorem}[Energy--KL identity]\label{thm:identity}
Let $P$ be any residual distribution with finite mean $\mu_P$ and positive
covariance $S_P$. Let $P_0=\mathcal{N}(\mu_0,S_0)$ be the train-normal reference,
and let $P_G=\mathcal{N}(\mu_P,S_P)$ be the Gaussian distribution with the same
first two moments as $P$. Then
\begin{equation}\label{eq:identity}
  \mathbb{E}[e_t\mid P]-\frac{p}{2}
  \;=\; D_{\mathrm{KL}}(P_G\Vert P_0)
  \;+\; \frac{1}{2}\log\frac{\det S_P}{\det S_0}.
\end{equation}
When $S_P=S_0$, the covariance-volume correction vanishes and
\begin{equation}\label{eq:identity-mean}
  \mathbb{E}[e_t\mid P]-\frac{p}{2}
  \;=\; \tfrac{1}{2}(\mu_P-\mu_0)^{\top}S_0^{-1}(\mu_P-\mu_0)
  \;=\; D_{\mathrm{KL}}(P_G\Vert P_0).
\end{equation}
\end{theorem}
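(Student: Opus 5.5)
We compute both sides of \eqref{eq:identity} in closed form, using only the first two moments of $P$. That suffices because $e_t$ is a quadratic form in $r_t$.

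\textbf{Left-hand side.} Write $r_t-\mu_0=(r_t-\mu_P)+(\mu_P-\mu_0)$ and expand the quadratic form. The cross term has zero mean under $P$. For the pure fluctuation term we use the trace trick:
\[
\mathbb{E}_P[(r_t-\mu_P)^\top S_0^{-1}(r_t-\mu_P)]=\mathrm{tr}(S_0^{-1}S_P).
\]
This gives
\[
\mathbb{E}[e_t\mid P]=\tfrac12\mathrm{tr}(S_0^{-1}S_P)+\tfrac12(\mu_P-\mu_0)^\top S_0^{-1}(\mu_P-\mu_0).
\]
This step needs only finite second moments, not Gaussianity of $P$. It also shows that $\mathbb{E}[e_t\mid P]=\mathbb{E}[e_t\mid P_G]$, which is the real content of the moment-matching assumption. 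As a sanity check, setting $P=P_0$ recovers $\mathbb{E}[e_t\mid H_0]=p/2$ from Proposition~\ref{prop:baseline}.

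\textbf{Gaussian KL.} We invoke the standard closed form for the KL divergence between two Gaussians \cite{cover2006elements}:
\[
D_{\mathrm{KL}}(P_G\Vert P_0)=\tfrac12\Big[\mathrm{tr}(S_0^{-1}S_P)+(\mu_P-\mu_0)^\top S_0^{-1}(\mu_P-\mu_0)-p+\log\tfrac{\det S_0}{\det S_P}\Big].
\]
If a self-contained derivation is wanted, it follows by taking $\mathbb{E}_{P_G}$ of the log-density ratio. That computation reuses the same trace identity, together with $\mathbb{E}_{P_G}[(r-\mu_P)^\top S_P^{-1}(r-\mu_P)]=p$.

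\textbf{Combining.} Subtracting $p/2$ from the expectation and comparing with the KL expression, the trace and mean terms match exactly. The only leftover is $-\tfrac12\log(\det S_0/\det S_P)=\tfrac12\log(\det S_P/\det S_0)$. Moving it to the right-hand side yields \eqref{eq:identity}.

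\textbf{Equal-covariance case.} When $S_P=S_0$, the log-determinant term vanishes and $\mathrm{tr}(S_0^{-1}S_P)=p$. The remaining quantity is half the squared Mahalanobis distance between the means, which gives \eqref{eq:identity-mean}.

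The main obstacle is only bookkeeping: keeping the sign of the log-determinant term straight, since the KL expression carries $\log(\det S_0/\det S_P)$ while the statement carries its negative. The other point to state explicitly is that positivity of $S_P$ and $S_0$ is what makes the determinants and inverses well defined.
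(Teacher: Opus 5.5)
Your proposal is correct and follows the paper's proof essentially step for step. The paper likewise computes $\mathbb{E}[e_t\mid P]=\tfrac12\bigl(\operatorname{tr}(S_0^{-1}S_P)+\delta^{\top}S_0^{-1}\delta\bigr)$ from the quadratic-form expectation formula (your mean-plus-fluctuation split is just its derivation), subtracts the closed-form Gaussian KL so that the trace and mean-shift terms cancel, and reads off the equal-covariance case from $\operatorname{tr}(S_0^{-1}S_0)=p$.
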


Theorem~\ref{thm:identity} is the conceptual core of the paper. It says that
energy excess is not just a detector-specific score. It is tied to the KL
distance from the train-normal reference, with an explicit correction for
changes in covariance volume. In the common mean-shift regime, where the residual
covariance is approximately stable and the attack changes the residual mean, the
energy excess is exactly the Gaussian KL. When covariance also changes, the full
identity explains why mean energy and plug-in KL need not be identical. The proof
is given in Appendix~\ref{app:proofs}.

For a finite attack window \(i\) with length \(W_i\) and empirical residual mean
\(\hat\mu_f^{(i)}\), the mean-shift part of Eq.~\eqref{eq:identity-mean} gives
the integrated quantity
\begin{equation}\label{eq:dms}
D_{\mathrm{MS}}^{(i)}
=
W_i\cdot
\frac{1}{2}
(\hat\mu_f^{(i)}-\mu_0)^\top
S_0^{-1}
(\hat\mu_f^{(i)}-\mu_0).
\end{equation}
This quantity keeps only the mean-shift component and is used only to illustrate
the clean case of the identity.

Figure~\ref{fig:energy_identity_ms} shows this check on GDN/SWaT attack
windows. Each point is one processed attack window. Points near the \(y=x\)
line are consistent with mean-shift energy dominating; deviations indicate
covariance change, finite-window effects, or weak residual movement. The figure
is illustrative and is not an assumption that all residuals are Gaussian or
mean-shift only.

\begin{figure}[t]
\centering
\includegraphics[width=\columnwidth]{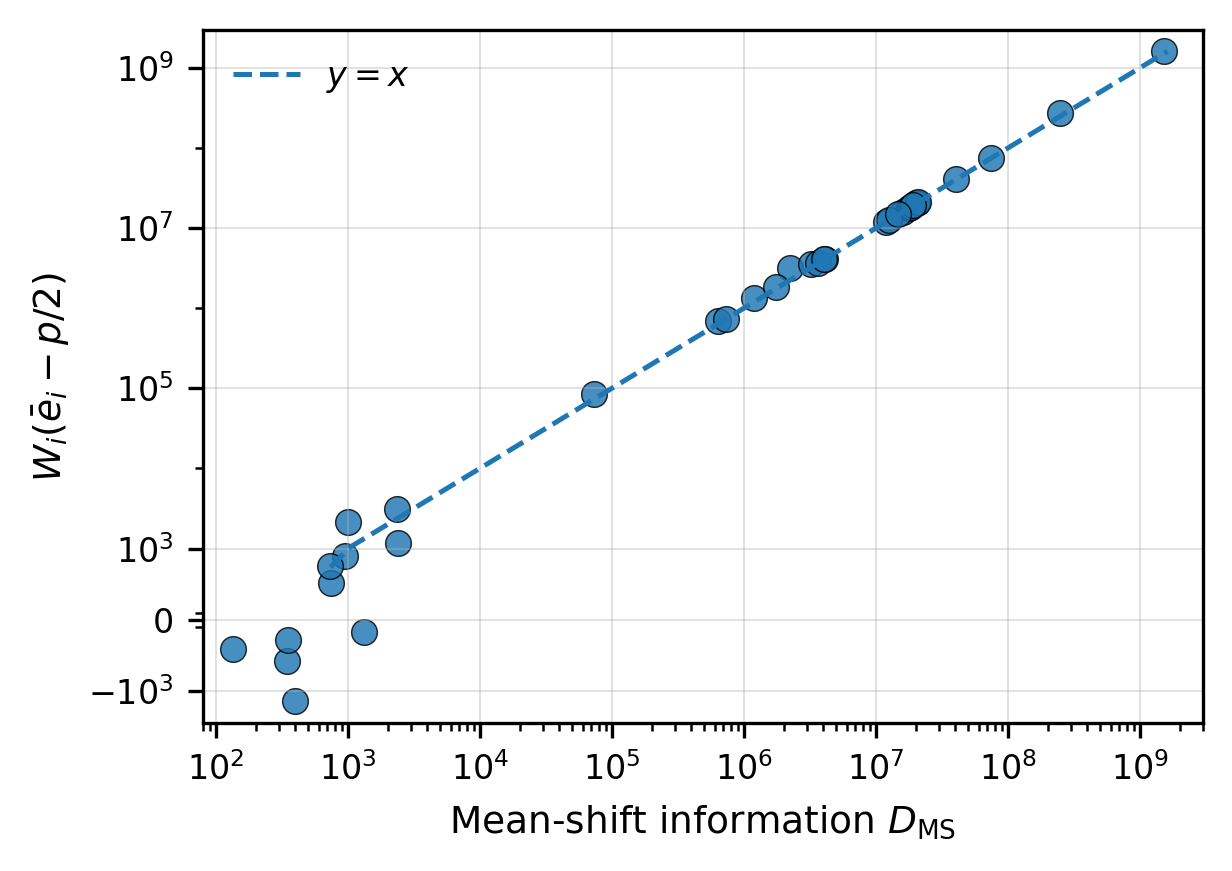}
\caption{Mean-shift identity check on GDN/SWaT attack windows. The plot compares
\(D_{\mathrm{MS}}^{(i)}\) with \(W_i(\bar e_i-p/2)\); points near \(y=x\)
follow the mean-shift identity.}
\label{fig:energy_identity_ms}
\end{figure}

\paragraph{\textbf{What uses Gaussianity}}
The Gaussian reference has a limited and explicit role. The point energy in
\eqref{eq:energy} is defined for any residual vector once $\mu_0$ and $S_0$ are
fixed. Rank metrics such as ROC-AUC and pAUC in Section~\ref{sec:metrics_rank}
only order these energies and do not require Gaussian residuals. Gaussianity is
used for two more specific statements: the ideal chi-squared calibration in
Proposition~\ref{prop:baseline}, and the closed-form KL expression for the
moment-matched Gaussian $P_G$ in Theorem~\ref{thm:identity}. Empirical residuals,
especially from deep detectors, can be heavy-tailed or non-Gaussian. In that
case, we read the plug-in Gaussian KL as a moment-matched information-energy
summary, not as a claim that the residual stream itself is Gaussian. Appendix~\ref{app:proofs} also records the standard fact that this moment-matched KL lower-bounds the true non-Gaussian KL when that divergence is defined.

The same energy stream can now be read on different parts of the data. On
training-normal residuals, it defines the reference and the effective dimension.
On attack-free test residuals, it measures how far normal behavior has moved from
the training reference. On attack residuals, it measures how much residual
evidence the attack creates. Section~\ref{sec:metrics} turns these readings into
the reported quantities, including $D_{\mathrm{FULL}}$, $P_{\mathrm{FULL}}$,
$D_{\mathrm{drift}}$, $D_{\mathrm{eff}}$, SDR, and compactness.

\subsection{Estimating the Normal Reference}\label{sec:reference}

The identity assumes a train-normal reference $P_0=\mathcal{N}(\mu_0,S_0)$. In
practice, each detector supplies its own train-normal residual stream. We set
$\mu_0$ to the empirical train mean and estimate $S_0$ with Ledoit--Wolf
shrinkage~\cite{ledoit2004wellconditioned}. This gives a positive-definite
covariance even when the residual dimension is large or the sample covariance is
ill-conditioned. The shrinkage level is data-driven, and the same estimator is
used for every detector, so no detector receives a hand-tuned reference.

Shrinkage changes the empirical training-energy baseline. Rather than forcing the
nominal value $p/2$, we measure
\begin{equation*}
  \bar e_{\mathrm{train}}
  =\frac{1}{|\mathcal{T}_{\mathrm{train}}|}
   \sum_{t\in\mathcal{T}_{\mathrm{train}}} e_t
\end{equation*}
and define $p_{\mathrm{eff}}=2\bar e_{\mathrm{train}}$. The compactness reported
later is $p_{\mathrm{eff}}/p$. This quantity is a property of the residual
reference: it tells us how much of the nominal residual dimension is effectively
used after whitening and shrinkage. We also anchor drift and attack-energy excess
at $\bar e_{\mathrm{train}}$ rather than the nominal $p/2$, so that an attack-free
test set with no train--test shift has zero drift by construction.

Finally, some residual streams contain extreme values caused by clipping,
saturation, preprocessing artifacts, or rare sensor faults. These values can
dominate mean energy. For this reason, Section~\ref{sec:metrics_drift} reports a
robust drift companion together with the mean-based drift, and the results flag
detector--testbed pairs where the two disagree sharply. The rank metrics still
use the raw energy scores; the robust value is a diagnostic aid for interpreting
train--test stability.

\section{Residual Information-Energy Metrics}\label{sec:metrics}

Section~\ref{sec:framework} defines the normalized residual energy. This section
turns that energy stream into the quantities used in Section~\ref{sec:results}.
Table~\ref{tab:metric_overview} groups them by role: rank metrics measure
ordering, information-energy metrics measure attack windows and drift,
compactness describes the residual space, and F1@p99 is a controlled Stage~2
probe. Together they separate the effects that final precision, recall, and F1
usually mix.

\begin{table*}[t]
\centering
\caption{Summary of the metrics used in Section~\ref{sec:results}. The table
separates rank metrics, information-energy diagnostics, representation
properties, and the controlled Stage~2 metric.}
\label{tab:metric_overview}
\scriptsize
\setlength{\tabcolsep}{3.2pt}
\renewcommand{\arraystretch}{1.12}
\begin{tabular}{@{}p{1.75cm}p{1.45cm}p{4.05cm}p{4.45cm}p{2.35cm}@{}}
\toprule
Metric & Level & Computation & Brief meaning & Used in Results \\
\midrule
ROC-AUC & Point & Area under the ROC curve using $e_t$ as the score. &
Full-curve ranking separation between attack and normal timestamps. &
Table~\ref{tab:main}; Fig.~\ref{fig:roc_f1} \\

pAUC@1\% & Point & Normalized ROC area restricted to $\mathrm{FPR}\le0.01$. &
Separation in the low-false-alarm region, closer to CPS operating conditions. &
Table~\ref{tab:main} \\

Per-attack AUC & Attack segment & AUC computed for one processed attack segment against matched normal samples or windows. &
Segment-level ranking; different from the detector-level ROC-AUC. &
Appendix Table~\ref{tab:attack_case_values} \\

$D_{\mathrm{FULL}}$ & Attack window & $W_i\widehat D_{\mathrm{KL}}^{(i)}$ using the attack-window residual moments. &
Integrated attack-window information energy relative to the trained-normal reference. &
Table~\ref{tab:main}; Fig.~\ref{fig:spearman_heatmap} \\

$P_{\mathrm{FULL}}$ & Dataset & Probability that an attack window has larger $D_{\mathrm{FULL}}$ than a duration-matched normal window. &
How often attack windows dominate comparable normal windows. &
Table~\ref{tab:main} \\

$D_{\mathrm{drift}}$ & Test normal & Test-normal energy excess over the trained-normal baseline; robust companion uses the test-normal median. &
Train--test movement of the normal residual reference. &
Table~\ref{tab:main} \\

$D_{\mathrm{eff}}$ & Attack window & $D_{\mathrm{atk}}^{(i)}-D_{\mathrm{drift}}$. &
Signed drift-corrected evidence; positive means above the drift floor, negative means below it. &
Table~\ref{tab:attack_explanations}; Appendix Table~\ref{tab:attack_case_values} \\

$p_{\mathrm{eff}}/p$ & Representation & $p_{\mathrm{eff}}=2\bar e_{\mathrm{train}}$, divided by nominal residual dimension $p$. &
Compactness of the residual space; a class signature, not a quality score by itself. &
Table~\ref{tab:main} \\

F1@p99 and p99 coverage & Alarm stream / attack segment & Threshold $e_t$ at the 99th percentile of training-normal energy. &
Controlled Stage~2 behavior under one shared alarm rule. &
Fig.~\ref{fig:roc_f1}; Table~\ref{tab:attack_explanations} \\
\bottomrule
\end{tabular}
\end{table*}

\subsection{Point-Level and Segment-Level Metrics}\label{sec:metrics_rank}

The simplest way to use the energy stream is to rank timestamps by their energy
score. ROC-AUC measures whether attack timestamps tend to receive larger energy
than normal timestamps over all possible thresholds~\cite{fawcett2006roc}. It is
useful because it does not select one detector-specific operating point.
However, a CPS detector is normally expected to operate at a very low false
alarm rate. We therefore also report pAUC@1\%, the normalized area under the ROC
curve restricted to $\mathrm{FPR}\le0.01$~\cite{mcclish1989partial}. A detector
can have reasonable ROC-AUC but poor pAUC@1\% if its separation appears only
when many normal points are also flagged. This is why Section~\ref{sec:results}
uses pAUC@1\% to interpret the gap between ROC-AUC and F1@p99.

We also use \emph{per-attack AUC} in the attack-case analysis. This is not the
same quantity as the detector-level ROC-AUC in Table~\ref{tab:main}.
Detector-level ROC-AUC pools timestamps across the whole test set. Per-attack
AUC is computed for one processed attack segment against matched normal samples
or duration-matched normal windows. It is used only to explain representative
cases: for example, whether a specific attack segment is ranked above nearby or
matched normal behavior even if the fixed p99 alarm rule does not fire.

\subsection{Attack-Window Information Energy}\label{sec:metrics_window}

Point-level rank metrics ignore the duration of an attack. To measure how much
information an entire attack window carries against the trained-normal
reference, we use the plug-in Gaussian KL estimate induced by the residual
moments. For attack window $i$ with length $W_i$, empirical mean
$\hat\mu_f^{(i)}$, and empirical covariance $\hat S_f^{(i)}$, we compute
\begin{equation}\label{eq:klhat}
\begin{split}
\widehat D_{\mathrm{KL}}^{(i)}
=\frac{1}{2}\Big[&\mathrm{tr}(S_0^{-1}\hat S_f^{(i)})
+(\hat\mu_f^{(i)}-\mu_0)^\top S_0^{-1}(\hat\mu_f^{(i)}-\mu_0) \\
&-p+\log\frac{\det S_0}{\det \hat S_f^{(i)}}\Big].
\end{split}
\end{equation}
The integrated window score is
\begin{equation}\label{eq:dfull}
D_{\mathrm{FULL}}^{(i)} = W_i\,\widehat D_{\mathrm{KL}}^{(i)}.
\end{equation}
A short, sharp attack and a long, weak attack can therefore have comparable
integrated evidence. This is why Section~\ref{sec:results} uses the median
$D_{\mathrm{FULL}}$ to summarize detector-level attack-window energy and uses
per-attack $D_{\mathrm{FULL}}$ for cross-detector agreement.

The covariance estimate in \eqref{eq:klhat} can be unstable for short attack
windows. We therefore keep a reliability flag in the supplementary files and
interpret short-window examples together with the mean-shift companion
\(D_{\mathrm{MS}}^{(i)}\) from Eq.~\eqref{eq:dms}, which does not require
estimating an attack covariance. In the main text, \(D_{\mathrm{FULL}}\) should
be read as a moment-matched window-energy summary, not as a detector threshold.

To aggregate window-level evidence over a dataset, we report
$P_{\mathrm{FULL}}$. For each processed attack window \(\mathcal{A}_i\) with duration \(W_i\), let
\(\mathcal{N}_i=\{\mathcal{B}_{i,1},\ldots,\mathcal{B}_{i,|\mathcal{N}_i|}\}\)
be a set of attack-free normal windows with the same duration. We compute the
same integrated statistic on each window and define
\begin{equation}\label{eq:pfull}
P_{\mathrm{FULL}}
=
\frac{1}{\sum_i |\mathcal{N}_i|}
\sum_i
\sum_{j=1}^{|\mathcal{N}_i|}
\mathbf{1}\!\left[
D_{\mathrm{FULL}}(\mathcal{A}_i)
>
D_{\mathrm{FULL}}(\mathcal{B}_{i,j})
\right].
\end{equation}
Here \(\mathbf{1}[\cdot]\) is the indicator function. Thus, \(P_{\mathrm{FULL}}\) is the probability that a processed attack window
carries more integrated residual energy than a duration-matched normal window.
A high value means that attack windows usually dominate normal windows on the
same integrated energy scale.

\subsection{Train--Test Drift and Drift-Corrected Evidence}\label{sec:metrics_drift}

Attack energy is meaningful only after asking whether the normal residual
reference is stable. Let $\bar e_{\mathrm{train}}$ be the mean energy on the
training-normal residuals and let $\bar e_{\mathrm{test\text{-}normal}}$ be the
mean energy on attack-free test residuals. We define the train--test drift as
\begin{equation}\label{eq:ddrift}
D_{\mathrm{drift}}
=\bar e_{\mathrm{test\text{-}normal}}-\bar e_{\mathrm{train}}.
\end{equation}
This quantity measures how far normal test behavior has moved from the
trained-normal reference. Under the Gaussian identity in Section~\ref{sec:framework},
it has a KL-linked interpretation. Empirically, we use it as a stability
diagnostic: a large value means that the reference used for scoring attacks is
already far from current normal behavior.

Some detectors produce a small number of extreme normal residuals. In that case,
the mean drift can be dominated by outliers. We therefore also report a robust
companion,
\begin{equation}
D_{\mathrm{drift}}^{\mathrm{rob}}
=\mathrm{median}(e_{\mathrm{test\text{-}normal}})-\bar e_{\mathrm{train}},
\end{equation}
and use it for interpretation when the data-quality flag indicates extreme
normal residuals. The raw and robust drift values have different roles: the raw
mean is the energy expectation used by the framework, while the robust companion
shows whether that expectation is being driven by a small tail of normal points.

For each attack window, we also compute the mean attack energy excess
\begin{equation}
D_{\mathrm{atk}}^{(i)} = \bar e_i-\bar e_{\mathrm{train}},
\end{equation}
where $\bar e_i$ is the mean energy in attack window $i$. We then subtract the
normal drift floor:
\begin{equation}\label{eq:deff}
D_{\mathrm{eff}}^{(i)} = D_{\mathrm{atk}}^{(i)}-D_{\mathrm{drift}}.
\end{equation}
Unlike $D_{\mathrm{FULL}}$, the quantity $D_{\mathrm{eff}}$ is signed and is
not a KL divergence. It is a contrast between the attack excess and the normal
drift floor. A positive $D_{\mathrm{eff}}^{(i)}$ means that the attack rises
above the drifted normal background. A value near zero means that the attack
energy is comparable to normal drift. A negative value means that the attack
energy is below the train--test shift already present in normal data. Therefore,
$D_{\mathrm{eff}}\in(-\infty,+\infty)$, and negative values are expected for
attacks whose residual evidence is weaker than drift.

We also report the signal-to-drift ratio
\begin{equation}\label{eq:sdr}
\mathrm{SDR}^{(i)} = \frac{D_{\mathrm{atk}}^{(i)}}{D_{\mathrm{drift}}},
\end{equation}
when the drift denominator is positive and not close to zero. SDR is a ratio
form of the same idea: values above one indicate attack excess larger than
drift, values near one are marginal, and values between zero and one are
drift-dominated. When the denominator is nonpositive or very small, SDR is used
only as a descriptive quantity and is not treated as a ranking metric.
Section~\ref{sec:results} uses these quantities to explain why some missed
alarms are not simple threshold failures.

\subsection{Residual-Space Compactness}\label{sec:metrics_compactness}

The metrics above describe attack and drift behavior. We also measure the shape
of the residual representation itself. Because the trained-normal energy is
centered at $\bar e_{\mathrm{train}}$, we define the effective residual dimension
as
\begin{equation}\label{eq:peff}
p_{\mathrm{eff}} = 2\bar e_{\mathrm{train}},
\qquad
\mathrm{compactness}=p_{\mathrm{eff}}/p,
\end{equation}
where $p$ is the nominal residual dimension. If $p_{\mathrm{eff}}/p$ is close to
one, the residual uses most of its nominal dimension. If it is much smaller than
one, the residual energy is concentrated in a lower-dimensional structure.

Compactness is not a detector-quality score by itself. A lower value is not
always better, and a higher value is not always worse. Instead, compactness
helps identify what kind of residual space the detector creates. Full-observation
predictors can spread evidence across many channels, invariant methods can
compress evidence into algebraic constraints, and sensor-subspace methods can
focus on a smaller set of physical innovations. This is why
Section~\ref{sec:results} interprets compactness together with detector design
rather than using it as a leaderboard metric.

\subsection{Controlled Stage-2 Metrics}\label{sec:metrics_stage2}

The final quantities are deliberately decision-rule-dependent. They are included
not to rank Stage~1 representations, but to show what one fixed Stage~2 rule
extracts from the same energy stream. We threshold at the 99th percentile of
training-normal energy,
\begin{equation}
\tau_{99}=Q_{0.99}\big(\{e_t:t\in\mathrm{train\text{-}normal}\}\big),
\end{equation}
and define alarms by \(a_t=\mathbf{1}[e_t>\tau_{99}]\).

The threshold is based on train-normal data on purpose. Using attack-free test
data would be an oracle calibration step and would hide the train--test drift
that the framework is meant to expose. Thus, F1@p99 does not guarantee a fixed
realized false-positive rate on the test trace; it is a controlled Stage~2 probe
of how the train-normal reference transfers to test time.

For attack-case analysis, p99 coverage is the fraction of points inside one
processed attack segment that exceed the same threshold,
\begin{equation}
\mathrm{coverage}_{99}^{(i)}
=\frac{1}{W_i}\sum_{t\in\mathcal{A}_i}\mathbf{1}[e_t>\tau_{99}].
\end{equation}
F1@p99 and p99 coverage are not best-case alarm metrics. They show whether one
shared threshold extracts the Stage~1 evidence, and we avoid point adjustment
because it can inflate time-series F1~\cite{kim2022rigorous}.

\section{Experimental Setup}\label{sec:setup}

Sections~\ref{sec:framework} and~\ref{sec:metrics} define the residual
information-energy framework and the metrics computed from it. This section
specifies the testbeds, detectors, and evaluation protocol. The guiding choice
is uniformity: we run the same fixed, detector-agnostic protocol on every
residual stream, with no per-detector tuning. The only detector-specific input
to the evaluation is the residual representation exported by the detector.

\subsection{Testbeds}\label{sec:testbeds}

We use three industrial control-system testbeds, summarized in
Table~\ref{tab:testbeds}. Each provides an attack-free training recording and a
separately recorded test window containing labelled attacks. SWaT is a
six-stage water-treatment plant and is the canonical benchmark for this
setting~\cite{mathur2016swat}. WADI extends the water-system setting to a
larger distribution network and contains attacks that affect local parts of a
larger plant~\cite{ahmed2017wadi}. HAI is a hardware-in-the-loop power and
boiler testbed with labelled attack episodes~\cite{shin2021hai}.

We report \emph{processed attack segments} rather than assuming a one-to-one
mapping to official attack identifiers. This distinction is important because
detectors run at different native temporal resolutions and use different
windowing. As a result, an official attack may be shortened, merged with an
adjacent attack, or represented by a different number of residual samples after
residual extraction. When Section~\ref{sec:results} uses an official attack
description, it does so only for processed segments whose alignment is clear. Residual-extraction and attack-alignment details are provided in Appendix~\ref{app:extraction_alignment}.

\begin{table}[t]
\centering
\caption{Testbeds. Test length and processed attack-segment count vary with
each detector's native resolution and windowing; ranges are across detectors.}
\label{tab:testbeds}
\footnotesize
\begin{tabular}{@{}llll@{}}
\toprule
Testbed & Test length & Attack rate & Processed seg. \\
\midrule
SWaT~\cite{mathur2016swat} & 32k--450k & $\approx 12\%$ & 32--35 \\
WADI~\cite{ahmed2017wadi} & 1.3k--173k & $\approx 6\%$ & 6--14 \\
HAI~\cite{shin2021hai}    & 3.3k--402k & $\approx 2$--$4\%$ & 50 \\
\bottomrule
\end{tabular}
\end{table}

\subsection{Detectors and Residual Spaces}\label{sec:detectors}

We evaluate five detectors spanning four residual paradigms
(Table~\ref{tab:detectors}). The residual space---the vector that a detector
actually scores---is a first-class property in our evaluation, because it fixes
the residual dimension $p$ and therefore the energy scale used by
Section~\ref{sec:metrics}. GDN~\cite{deng2021gdn} and
FuSAGNet~\cite{han2022fusagnet} are graph predictors, and
TranAD~\cite{tuli2022tranad} is a transformer reconstructor; all three score
the full observation vector. NSIBF~\cite{feng2021nsibf} is an observer-based
method that scores the sensor subspace while treating actuators as control
inputs. GeCo~\cite{wolsing2025geco} produces one continuous residual per
learned algebraic invariant, so its residual space is the set of invariants
rather than the original sensors.

Each detector exports its residuals to a common array interface. The framework
then operates identically on these arrays: it fits a trained-normal reference,
computes point energies, aggregates attack-window statistics, and reports drift
and compactness. Appendix~\ref{app:extraction_alignment} gives the per-detector
residual extraction details and explains how processed attack segments are
aligned across detector timelines.

\begin{table}[t]
\centering
\caption{Detectors, paradigms, and residual spaces. $p$ is the residual
dimension on SWaT\,/\,WADI\,/\,HAI.}
\label{tab:detectors}
\footnotesize
\begin{tabular}{@{}lllc@{}}
\toprule
Detector & Paradigm & Residual space & $p$ \\
\midrule
GDN      & graph predictor        & observation      & 51\,/\,123\,/\,79 \\
FuSAGNet & sparse-AE + graph       & observation      & 51\,/\,123\,/\,79 \\
TranAD   & transformer recon.      & observation      & 51\,/\,123\,/\,79 \\
NSIBF    & observer-based          & sensor subspace & 25\,/\,67\,/\,27 \\
GeCo     & algebraic invariants    & invariant space  & 47\,/\,119\,/\,72 \\
\bottomrule
\end{tabular}
\end{table}

\subsection{Protocol}\label{sec:protocol}

We instantiate the reference of Section~\ref{sec:reference} uniformly. For each
detector, we fit $P_0=\mathcal{N}(\mu_0,S_0)$ from its train-normal residuals
alone: $\mu_0$ is the train mean and $S_0$ is the Ledoit--Wolf shrinkage
covariance. This gives an invertible reference covariance without choosing a
detector-specific regularization constant. We then compute the point energy of
\eqref{eq:energy} on the raw residual stream.

All later quantities are computed from the same energy stream. The rank metrics
of Section~\ref{sec:metrics_rank} use raw point energies. The attack-window
metrics of Section~\ref{sec:metrics_window} use processed attack segments and
duration-matched normal windows. Drift and drift-corrected quantities are
anchored at the measured train-normal baseline
$\bar e_{\mathrm{train}}=p_{\mathrm{eff}}/2$, as described in
Section~\ref{sec:metrics_drift}. We apply no winsorization. Mean-based drift is
therefore reported together with its robust companion, and a data-quality flag
marks detector--testbed pairs whose mean is dominated by a few extreme residuals.

The protocol is fixed across detectors and is not tuned per detector or per
dataset. Thus, differences in Section~\ref{sec:results} come from the residual
representations and their induced energy distributions, not from
per-detector evaluation choices.

\section{Results and Findings}\label{sec:results}

The goal of this section is not to build another leaderboard. The goal is to
answer the evaluation questions that final precision, recall, and F1 leave open.
Section~\ref{sec:res_within} answers RQ1 and RQ2 by asking whether residual
energy contains attack evidence and whether the train-normal reference remains
stable at test time. Section~\ref{sec:res_dataset} answers RQ4, and part of RQ3,
by showing why similar missed alarms can have different explanations at the
attack level. Section~\ref{sec:res_cross} answers the cross-detector part of
RQ3 by asking whether rankings and attack difficulty transfer across residual
spaces and testbeds. Table~\ref{tab:main} should be read column by column rather
than as a single leaderboard: each column answers a different question about
Stage~1 residual evidence. F1@p99 is shown separately because it is a controlled
Stage~2 operating point, not the main ranking criterion.

Table~\ref{tab:main} reports the main Stage~1 quantities. ROC-AUC and
pAUC@1\% measure point-level ranking, with pAUC focused on the low-false-alarm
region. \(P_{\mathrm{FULL}}\) and median \(D_{\mathrm{FULL}}\) summarize
attack-window information energy. \(D_{\mathrm{drift}}\) measures train--test
movement of the normal residual reference, and \(p_{\mathrm{eff}}/p\) describes
residual-space compactness. A robust drift companion is shown in parentheses;
when a row is marked with \textsuperscript{\dag}, the robust value should be
read first because the mean is dominated by extreme residuals.

\begin{table*}[t]
\centering
\caption{Main detector-level Stage-1 summary. Each column answers a different
question about residual evidence before Stage~2. Parentheses show the robust
drift companion; \textsuperscript{\dag} marks rows where mean drift is dominated
by extreme residuals.}
\label{tab:main}
\scriptsize
\setlength{\tabcolsep}{3.5pt}
\begin{tabular}{@{}llrrrrrr@{}}
\toprule
Testbed & Detector & ROC-AUC & pAUC@1\% & $P_{\mathrm{FULL}}$ & $D_{\mathrm{drift}}$ (rob.) & $p_{\mathrm{eff}}/p$ & med. $D_{\mathrm{FULL}}$ \\
\midrule
\multicolumn{8}{@{}l}{\textit{SWaT}}\\
 & GeCo     & \textbf{0.877} & 0.532 & \textbf{0.906} & $8.6{\times}10^{1}$ ($-9.4$)\textsuperscript{\dag} & 0.581 & $4.1{\times}10^{5}$ \\
 & FuSAGNet & 0.831 & \textbf{0.616} & 0.575 & $8.8{\times}10^{3}$ ($1.7{\times}10^{3}$)\textsuperscript{\dag} & 0.780 & $1.3{\times}10^{5}$ \\
 & NSIBF    & 0.808 & 0.593 & 0.511 & $5.9{\times}10^{2}$ ($6.2{\times}10^{2}$) & 0.790 & $6.4{\times}10^{3}$ \\
 & GDN      & 0.788 & 0.301 & 0.497 & $1.9{\times}10^{5}$ ($1.7{\times}10^{5}$) & 0.659 & $3.6{\times}10^{6}$ \\
 & TranAD   & 0.750 & 0.008 & 0.543 & $1.6{\times}10^{12}$ ($9.1{\times}10^{2}$)\textsuperscript{\dag} & 0.707 & $7.5{\times}10^{4}$ \\
\midrule
\multicolumn{8}{@{}l}{\textit{WADI}}\\
 & NSIBF    & \textbf{0.796} & 0.266 & \textbf{0.852} & $7.0{\times}10^{1}$ ($3.5{\times}10^{1}$) & 0.879 & $5.3{\times}10^{3}$ \\
 & GeCo     & 0.624 & 0.003 & 0.640 & $7.0{\times}10^{-1}$ ($-9.5$) & \textbf{0.164} & $1.9{\times}10^{4}$ \\
 & TranAD   & 0.611 & \textbf{0.278} & 0.736 & $1.9{\times}10^{11}$ ($1.3{\times}10^{9}$)\textsuperscript{\dag} & 0.873 & $9.6{\times}10^{10}$ \\
 & GDN      & 0.530 & 0.146 & 0.591 & $2.7{\times}10^{9}$ ($4.8{\times}10^{9}$) & 0.792 & $7.4{\times}10^{10}$ \\
 & FuSAGNet & 0.530 & 0.145 & 0.601 & $9.4{\times}10^{8}$ ($1.7{\times}10^{9}$) & 0.720 & $2.6{\times}10^{10}$ \\
\midrule
\multicolumn{8}{@{}l}{\textit{HAI}}\\
 & TranAD   & \textbf{0.917} & 0.492 & 0.971 & $6.4{\times}10^{10}$ ($4.0$)\textsuperscript{\dag} & 0.840 & $4.0{\times}10^{4}$ \\
 & FuSAGNet & 0.899 & \textbf{0.580} & 0.955 & $3.2{\times}10^{2}$ ($1.3{\times}10^{1}$)\textsuperscript{\dag} & 0.726 & $4.7{\times}10^{4}$ \\
 & GDN      & 0.890 & 0.510 & 0.959 & $2.1{\times}10^{7}$ ($4.1$)\textsuperscript{\dag} & 0.760 & $1.4{\times}10^{5}$ \\
 & GeCo     & 0.868 & 0.469 & \textbf{0.986} & $5.1{\times}10^{11}$ ($-4.1$)\textsuperscript{\dag} & \textbf{0.306} & $1.6{\times}10^{5}$ \\
 & NSIBF    & 0.844 & 0.411 & 0.878 & $3.4$ ($-0.55$) & 0.919 & $9.1{\times}10^{2}$ \\
\bottomrule
\end{tabular}
\end{table*}

\subsection{Within-Detector Residual Evidence}\label{sec:res_within}

\noindent\textbf{Finding 1. Alarm-level precision, recall, and F1 can rank the Stage-2 cut, not only the Stage-1 representation.}
Most CPS anomaly-detection papers report precision, recall, or F1 after a
chosen alarm rule. Those numbers are useful for a deployed system, but they do
not say what was visible before the rule was applied. Table~\ref{tab:main}
therefore separates several questions. ROC-AUC and pAUC@1\% ask whether attack
points rank above normal points, with pAUC focusing on the low-false-alarm
region. $P_{\mathrm{FULL}}$ asks whether whole attack windows dominate
matched normal windows. $D_{\mathrm{drift}}$ asks whether the normal residual
reference moved between training and test. Compactness asks what kind of
residual space the detector creates. F1@p99 is reported outside the main table
because it is one fixed Stage-2 cut through the same energy stream.

Fig.~\ref{fig:roc_f1} shows why the separation matters. On SWaT, the five
detectors have a relatively narrow ROC-AUC range, from $0.750$ to $0.877$, but
their F1@p99 values range from $0.025$ to $0.764$. TranAD is the clearest case:
its SWaT ROC-AUC is $0.750$, so its residuals still contain some ranking
information, but its pAUC@1\% is only $0.008$ and its F1@p99 collapses to
$0.025$. The low-FPR region, not the full ROC curve, explains why the shared
p99 threshold extracts little alarm-level performance. Appendix~\ref{app:supp_within}
reports the precision, recall, AP, pAUC@5\%, and false-positive segments behind
this controlled Stage-2 probe.

\begin{figure}[t]
\centering
\includegraphics[width=\columnwidth]{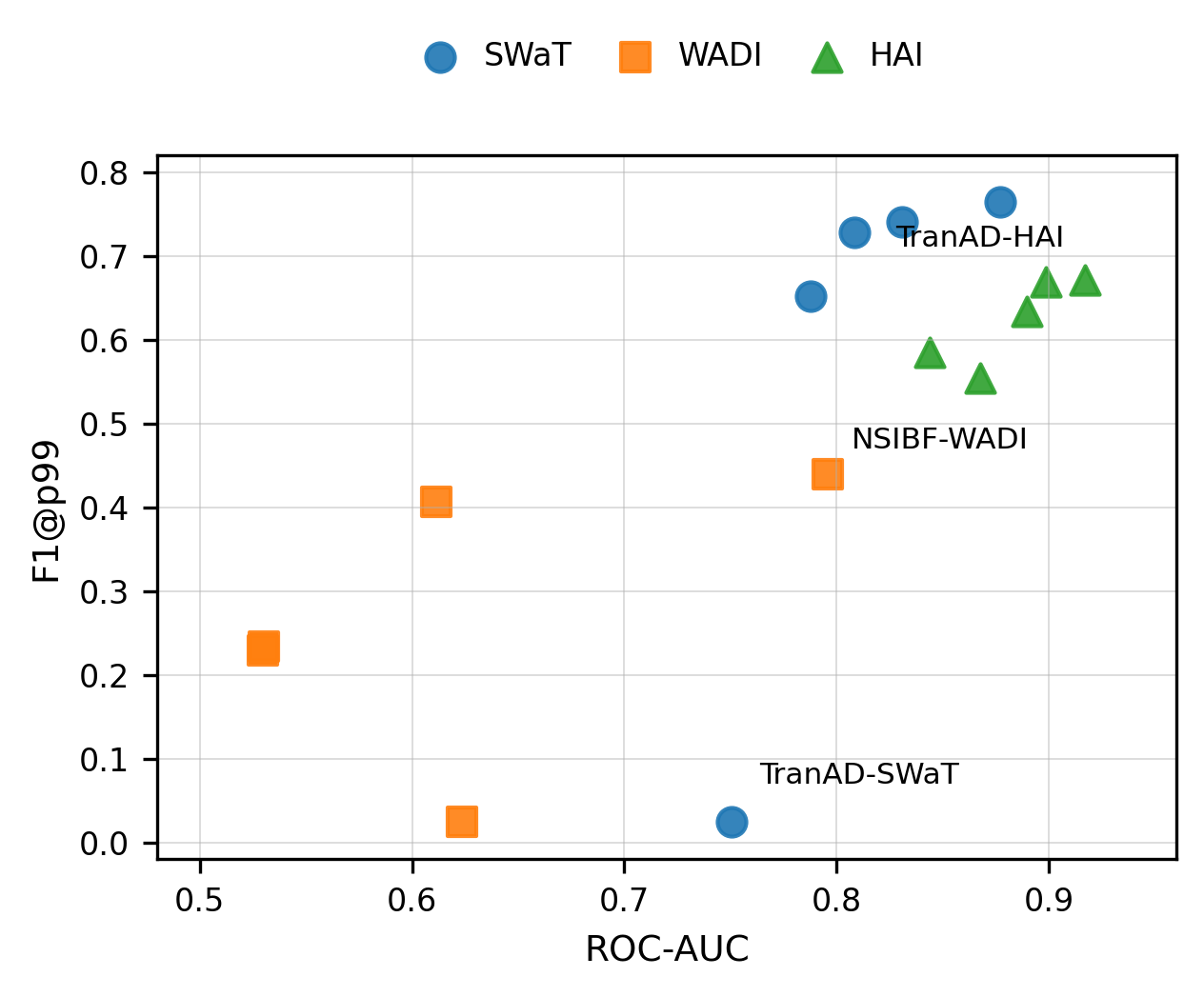}
\caption{Stage-1 ranking versus controlled Stage-2 alarm behavior. Each point is
one detector--testbed pair. ROC-AUC is computed from residual energy before any
alarm rule is chosen, while F1@p99 is computed after applying one shared p99
train-normal threshold. The figure is not a detector leaderboard; it shows that
similar Stage-1 ranking can lead to very different alarm streams after Stage~2.}
\label{fig:roc_f1}
\end{figure}

\noindent\textbf{Finding 2. Some failures are reference-stability problems, not only weak-representation problems.}
A missed alarm is often interpreted as evidence that the model did not learn the
plant. The drift columns in Table~\ref{tab:main} show that this is not always
the right explanation. On SWaT, GDN has reliable mean drift of
$1.9{\times}10^5$ with robust companion $1.7{\times}10^5$, while NSIBF has a
much smaller drift of about $6.2{\times}10^2$. These values are not final
detection scores. They measure how far attack-free test residuals have moved
from the trained-normal reference. A detector can still detect attacks under
large drift if attack energy separates in the low-FPR region, but drift reduces
the margin available to a fixed threshold. For detector--testbed pairs marked
with \textsuperscript{\dag}, the mean drift is dominated by extreme residuals;
there the robust companion is the safer stability summary. Appendix~\ref{app:supp_within}
reports the max-$z$ and excess-kurtosis values used to identify these cases.

\noindent\textbf{Finding 3. The residual space determines what kind of physical evidence a detector can express.}
Compactness in Table~\ref{tab:main} makes this representation difference
visible. Observation-space predictors usually use a large fraction of their
nominal residual dimension. Invariant and subspace methods can create a more
focused residual space. On WADI, for example, GeCo has
$p_{\mathrm{eff}}/p=0.164$, compared with $0.720$--$0.873$ for the
observation-space detectors and $0.879$ for NSIBF. This does not mean that low
compactness is always better. It means that the detector is expressing a
different kind of evidence: full-channel predictors spread evidence across many
coordinates, invariant detectors compress evidence into learned algebraic
relations, and NSIBF focuses on sensor innovations. These differences appear
before any Stage-2 alarm rule is chosen.

\subsection{Dataset- and Attack-Level Explanations}\label{sec:res_dataset}

\noindent\textbf{Finding 4. WADI is not only hard; it is hard for a residual-space reason.}
WADI is often treated as a difficult benchmark, but an alarm-level score alone
does not explain why. The information-energy view gives a representation-level
explanation. WADI is larger than SWaT and contains attacks that can affect only
a small local part of the plant, such as a small set of valves, pumps, or
sensors. When a detector pools all channels into one scalar energy, the attacked
channels are mixed with many unaffected channels. The local change can therefore
be diluted by normal variation elsewhere in the plant. This explains the split
in Table~\ref{tab:main}: NSIBF, which scores a sensor-innovation subspace,
reaches ROC-AUC $0.796$ and $P_{\mathrm{FULL}}=0.852$ on WADI, while the
observation-space detectors have ROC-AUC from $0.530$ to $0.611$.

We use representative processed attack segments only to make this aggregate
pattern concrete. Table~\ref{tab:attack_explanations} is an explanatory guide,
not a complete attack taxonomy. The ``What to look for'' column states the
metric pattern, and the ``Explanation'' column states what that pattern means
for Stage~1 evidence, residual-space choice, drift, or Stage~2 extraction.
Appendix~\ref{app:attack_cases} reports the numeric values behind the examples.

\noindent\textbf{Finding 5. The same missed alarm can imply different fixes.}
Table~\ref{tab:attack_explanations} separates four explanations. In the first
row, a SWaT processed segment maps to an official row documented as having
``No impact.'' The residual metrics agree: per-attack AUC is very low, p99
coverage is zero, and drift-corrected energy is nonpositive. More threshold
tuning cannot recover much if the residual representation barely moves. The
second row is different. A WADI segment is visible in the NSIBF sensor-innovation
subspace, but weak under full-observation pooled scores. This points to residual
space design rather than only threshold calibration. The third row shows a
Stage-2 problem: observation-space detectors rank the attack above matched
normal points, but the p99 threshold still produces no coverage. The fourth row
shows a drift-dominated case, where attack energy exists but is smaller than the
train--test shift already present in that residual space.

\begin{table*}[t]
\centering
\caption{Representative attack-level explanations. Similar missed alarms can
come from weak residual evidence, localized evidence, Stage~2 thresholding, or
drift-dominated evidence. Numeric values are reported in
Appendix~\ref{app:attack_cases}.}
\label{tab:attack_explanations}
\footnotesize
\setlength{\tabcolsep}{4pt}
\begin{tabular}{@{}p{2.6cm}p{3.0cm}p{4.4cm}p{5.2cm}@{}}
\toprule
Question & Example & What to look for & Explanation \\
\midrule
Is there residual evidence? & SWaT processed segment 3, matched to the official MV-504 row with documented ``No impact'' & Per-attack AUC is near chance or below chance, p99 coverage is zero, and drift-corrected energy is nonpositive. & The measured channels contain little attack evidence above normal variation. More threshold tuning cannot recover much signal if the residual does not move. \\
\addlinespace[2pt]
Is the evidence local? & WADI processed segment 3 & One residual space has high per-attack AUC and p99 coverage, while full-observation residuals stay near chance and produce no p99 coverage. & The signal is visible in a smaller sensor subspace but diluted by full-channel pooled energy. A more focused residual space is the likely fix. \\
\addlinespace[2pt]
Is the threshold the bottleneck? & WADI processed segment 13 for observation-space detectors & Per-attack AUC is clearly above chance, but p99 coverage is still zero. & The detector ranks attack points above normal points, but the fixed p99 threshold misses them. The residual contains information, but Stage~2 does not extract it. \\
\addlinespace[2pt]
Is drift hiding the attack? & WADI processed segment 3 for observation-space detectors & Raw attack-window energy is nonzero, but SDR is near zero and $D_{\mathrm{eff}}<0$. & Attack-window energy is present, but it is smaller than the train--test shift in that residual space. The normal reference must be stabilized or updated. \\
\bottomrule
\end{tabular}
\end{table*}

\subsection{Cross-Detector and Cross-Testbed Comparison}\label{sec:res_cross}

\noindent\textbf{Finding 6. A detector ranking on one testbed does not reliably transfer to another.}
Leaderboard-style evaluation assumes that a detector that wins on one benchmark
is generally stronger. Our results do not support that reading. Using ROC-AUC,
TranAD ranks first on HAI but last on SWaT, while NSIBF ranks first on WADI but
last on HAI. The cross-testbed Spearman correlations of ROC-AUC rankings are
weak or negative: $-0.50$ for HAI--SWaT, $-0.70$ for HAI--WADI, and $0.10$ for
SWaT--WADI. Appendix~\ref{app:cross_testbed} reports the corresponding ranking
correlations for ROC-AUC, median $D_{\mathrm{FULL}}$, and $P_{\mathrm{FULL}}$.
With only five detectors these correlations are descriptive, but they reinforce
the main point: a detector should not be selected from a single aggregate
ranking without asking what residual evidence it can see on the target plant.

\noindent\textbf{Finding 7. Detectors disagree most when they expose different evidence.}
Fig.~\ref{fig:spearman_heatmap} summarizes cross-detector agreement on
per-attack $D_{\mathrm{FULL}}$. Same-space observation detectors often agree
strongly: FuSAGNet--TranAD reaches $\rho=0.95$ on HAI, and FuSAGNet--GDN
reaches $\rho=0.92$ on WADI. Cross-space pairs can disagree much more. On WADI,
GDN--GeCo drops to $\rho=0.06$ and FuSAGNet--GeCo to $\rho=0.10$. These
disagreements appear before any Stage-2 alarm rule is chosen. They show that
detectors do not merely assign different scores to the same evidence; in some
datasets, they expose different evidence.

\begin{figure*}[t]
\centering
\includegraphics[width=0.92\textwidth]{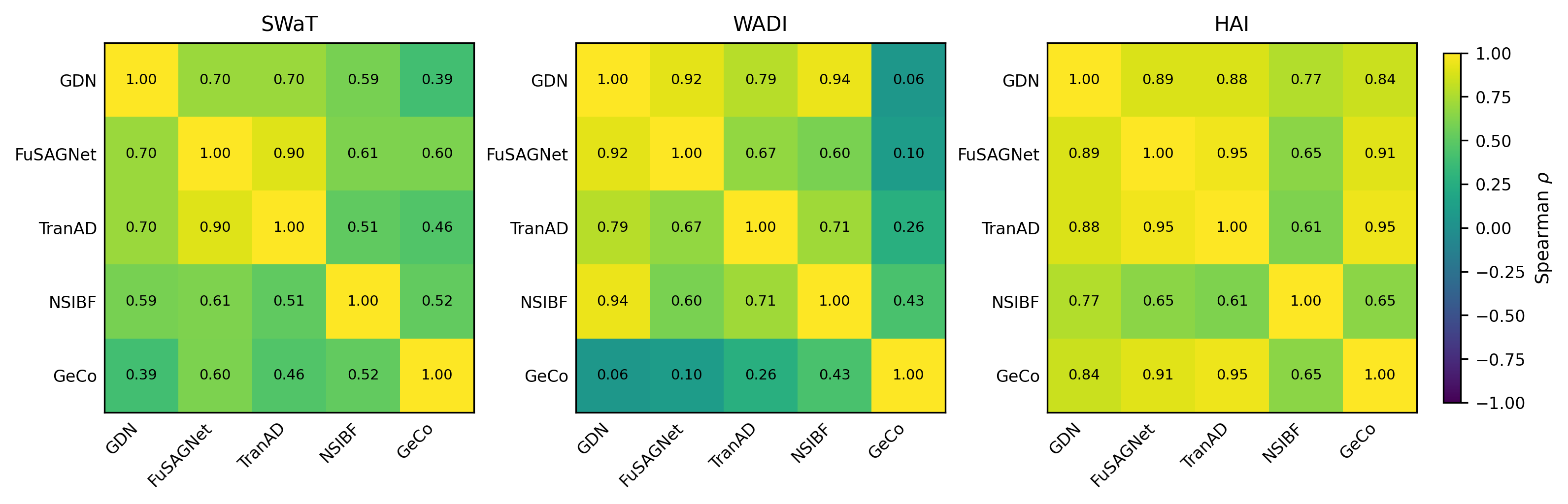}
\caption{Cross-detector agreement on per-attack integrated information energy.
Each cell is the Spearman correlation of $D_{\mathrm{FULL}}$ across processed
attack segments. High same-space correlations indicate that detectors expose
similar attack evidence, while low cross-space correlations indicate that
different residual spaces disagree about which attacks are difficult.}
\label{fig:spearman_heatmap}
\end{figure*}

Overall, the information-energy view does not produce one replacement score for
all detectors. Its value is explanatory. It separates what the residual
representation contains, how stable that representation is under train--test
shift, how compactly the detector encodes the plant, and what a fixed threshold
extracts from the resulting score distribution. These distinctions are hidden
when evaluation starts and ends with the final alarm stream.

\section{Related Work}\label{sec:related}

\subsection{Evaluating CPS Anomaly Detectors}\label{sec:rw_evaluation}

Evaluation practice in industrial intrusion detection is fragmented: even
when public datasets are shared, label interpretation, preprocessing,
operating points, and metrics vary across papers~\cite{lamberts2023sok}.
Point-wise precision, recall, and F1 also fit interval anomalies poorly,
because a detector may fire early, late, or for only part of an attack
window; range-based precision and recall~\cite{tatbul2018precision} and
volume-under-surface measures~\cite{paparrizos2022volume} were introduced to
capture this interval structure and to reduce dependence on a single
threshold or tolerance window.

The operating point itself can dominate a comparison. Kim et al.\ show that
common thresholding and point-adjustment conventions distort detector
rankings and can make even random anomaly scores appear
competitive~\cite{kim2022rigorous}. Threshold-free ranking views, such as
ROC analysis~\cite{fawcett2006roc} and partial AUC in the low-false-positive
region~\cite{mcclish1989partial}, reduce this dependence. Finally, the space
in which a score is computed matters as well: when many unrelated dimensions
are pooled, distance-based scores lose contrast, and anomalies local to a
small subspace are diluted by normal variation
elsewhere~\cite{zimek2012survey}.

These lines improve how a final score or alarm stream is summarized, but
they do not attribute an alarm-level outcome to its source. Our
information--energy quantities are instead pre-alarm measurements on one
common scale: they attribute the outcome to weak residual evidence,
reference drift, dimensional dilution, Stage-2 calibration, or limited
physical effect in the observed telemetry, rather than producing another
summary of the final alarm stream.

\subsection{Information-Theoretic Detectability}\label{sec:rw_info_detectability}

In control-theoretic CPS security, residual and innovation distributions
have long been read in terms of detectability and stealthiness. Bai et al.\
characterize tradeoffs between attack impact and
detectability~\cite{bai2017data}, Guo et al.\ construct worst-case stealthy
innovation-based attacks on remote state estimation~\cite{guo2018worst}, and
Kung et al.\ analyze the limits of $\epsilon$-stealthy attacks on
higher-order systems~\cite{kung2017performance}. Model-based detectors test
residuals or innovations with chi-squared and CUSUM
rules~\cite{murguia2019model}, and recent ICS work applies closed-form KL
divergence to residual distribution shifts inside a digital-twin
detector~\cite{kreso2026information}.

These results are largely analytic, are derived for linear state-estimation
models, and ask what an attacker could hide from an ideal or optimal
detector. We use the same information reading in the opposite direction:
KL-linked residual energy becomes an empirical measurement of what a fixed,
learned representation actually exposes, applied uniformly across
heterogeneous detector families before any alarm rule is chosen. The
digital-twin work of~\cite{kreso2026information} is closest in machinery,
but it builds a KL-based detector within one system, whereas we use the KL
reading as a cross-detector evaluation scale rather than as a detector.
\section{Limitations and Discussions}\label{sec:limitations}

\noindent\textbf{Residual availability.} The framework requires access to, or reconstruction of, a residual-like
signal. This is natural for prediction, reconstruction, observer-based, and
invariant-based detectors, but less direct for detectors that expose only a
final scalar score or a binary alarm; there, the choice of which intermediate signal to treat as the residual becomes part of the evaluation protocol.

\noindent\textbf{Moment-matched information reading.}
The point energy is well defined for any residual stream once the train-normal
mean and covariance are fixed, but the KL reading is a moment-matched Gaussian
summary that lower-bounds the true divergence (Appendix~A). We therefore use
the information--energy quantities as comparative and explanatory
measurements, not as exact probabilistic descriptions of residual behavior.

\section{Conclusion}\label{sec:conclusion}

This paper argues that CPS anomaly detectors should not be evaluated only by
their final alarms. Precision, recall, and F1 conflate two different questions:
whether the detector's learned representation exposes attack evidence, and
whether a particular decision rule converts that evidence into alarms. We
separate these questions by treating the detector as a two-stage pipeline and by
evaluating the first stage through normalized residual information energy. This
gives a detector-independent way to ask what the detector can see before
thresholding, CUSUM, point adjustment, or other alarm policies are applied.

Across five detectors and three CPS benchmarks, the framework explains failures
that final alarm metrics alone hide. Some attacks are visible in the residual
representation but poorly extracted by the alarm rule; others are weakened by
train--test drift, dimensional dilution, or limited physical observability in
the measured telemetry. These distinctions turn evaluation from a single
leaderboard number into a diagnostic tool. Rather than asking only which detector
has the highest F1, the proposed framework asks where attack evidence appears,
where it is lost, and how future detectors can build representations and alarm
policies around that evidence.

\bibliographystyle{IEEEtran}
\bibliography{references}

@article{giraldo2018survey,
  author    = {Giraldo, Jairo and Urbina, David and C{\'a}rdenas, Alvaro and
               Valente, Junia and Faisal, Mustafa and Ruths, Justin and
               Tippenhauer, Nils Ole and Sandberg, Henrik and Candell, Richard},
  title     = {A Survey of Physics-Based Attack Detection in Cyber-Physical Systems},
  journal   = {ACM Computing Surveys},
  volume    = {51},
  number    = {4},
  pages     = {1--36},
  year      = {2018},
  publisher = {ACM},
  doi       = {10.1145/3203245}
}

@inproceedings{cardenas2011attacks,
  author    = {C{\'a}rdenas, Alvaro A. and Amin, Saurabh and Lin, Zong-Syun and
               Huang, Yu-Lun and Huang, Chi-Yen and Sastry, Shankar},
  title     = {Attacks Against Process Control Systems: Risk Assessment,
               Detection, and Response},
  booktitle = {Proceedings of the 6th ACM Symposium on Information, Computer
               and Communications Security (ASIACCS)},
  pages     = {355--366},
  year      = {2011},
  publisher = {ACM}
}

@inproceedings{urbina2016limiting,
  author    = {Urbina, David I. and Giraldo, Jairo A. and C{\'a}rdenas, Alvaro A. and
               Tippenhauer, Nils Ole and Valente, Junia and Faisal, Mustafa and
               Ruths, Justin and Candell, Richard and Sandberg, Henrik},
  title     = {Limiting the Impact of Stealthy Attacks on Industrial Control Systems},
  booktitle = {Proceedings of the 2016 ACM SIGSAC Conference on Computer and
               Communications Security (CCS)},
  pages     = {1092--1105},
  year      = {2016},
  publisher = {ACM}
}

@inproceedings{deng2021gdn,
  author    = {Deng, Ailin and Hooi, Bryan},
  title     = {Graph Neural Network-Based Anomaly Detection in Multivariate
               Time Series},
  booktitle = {Proceedings of the AAAI Conference on Artificial Intelligence},
  volume    = {35},
  number    = {5},
  pages     = {4027--4035},
  year      = {2021}
}

@inproceedings{han2022fusagnet,
  author    = {Han, Siho and Woo, Simon S.},
  title     = {Learning Sparse Latent Graph Representations for Anomaly
               Detection in Multivariate Time Series},
  booktitle = {Proceedings of the 28th ACM SIGKDD Conference on Knowledge
               Discovery and Data Mining (KDD)},
  pages     = {2977--2986},
  year      = {2022},
  publisher = {ACM}
}

@article{tuli2022tranad,
  author    = {Tuli, Shreshth and Casale, Giuliano and Jennings, Nicholas R.},
  title     = {{TranAD}: Deep Transformer Networks for Anomaly Detection in
               Multivariate Time Series Data},
  journal   = {Proceedings of the VLDB Endowment},
  volume    = {15},
  number    = {6},
  pages     = {1201--1214},
  year      = {2022}
}

@inproceedings{feng2021nsibf,
  author    = {Feng, Cheng and Tian, Pengwei},
  title     = {Time Series Anomaly Detection for Cyber-physical Systems via
               Neural System Identification and Bayesian Filtering},
  booktitle = {Proceedings of the 27th ACM SIGKDD Conference on Knowledge
               Discovery and Data Mining (KDD)},
  pages     = {2858--2867},
  year      = {2021},
  publisher = {ACM},
  doi       = {10.1145/3447548.3467137}
}

@inproceedings{wolsing2025geco,
  author    = {Wolsing, Konrad and Wagner, Eric and Lux, Luisa and
               Wehrle, Klaus and Henze, Martin},
  title     = {{GeCos} Replacing Experts: Generalizable and Comprehensible
               Industrial Intrusion Detection},
  booktitle = {Proceedings of the 34th USENIX Security Symposium},
  year      = {2025},
  publisher = {USENIX Association}
}

@inproceedings{mathur2016swat,
  author    = {Mathur, Aditya P. and Tippenhauer, Nils Ole},
  title     = {{SWaT}: A Water Treatment Testbed for Research and Training on
               {ICS} Security},
  booktitle = {2016 International Workshop on Cyber-physical Systems for Smart
               Water Networks (CySWater)},
  pages     = {31--36},
  year      = {2016},
  publisher = {IEEE},
  doi       = {10.1109/CySWater.2016.7469060}
}

@inproceedings{ahmed2017wadi,
  author    = {Ahmed, Chuadhry Mujeeb and Palleti, Venkata Reddy and
               Mathur, Aditya P.},
  title     = {{WADI}: A Water Distribution Testbed for Research in the Design
               of Secure Cyber Physical Systems},
  booktitle = {Proceedings of the 3rd International Workshop on Cyber-Physical
               Systems for Smart Water Networks (CySWater)},
  pages     = {25--28},
  year      = {2017},
  publisher = {ACM}
}

@inproceedings{shin2021hai,
  author    = {Shin, Hyeok-Ki and Lee, Woomyo and Yun, Jeong-Han and
               Min, Byung-Gil},
  title     = {Two {ICS} Security Datasets and Anomaly Detection Contest on the
               {HIL}-based Augmented {ICS} Testbed},
  booktitle = {Proceedings of the 14th Cyber Security Experimentation and Test
               Workshop (CSET)},
  year      = {2021},
  publisher = {ACM},
  doi       = {10.1145/3474718.3474719},
  note      = {Introduces the HAI 21.03 dataset}
}

@inproceedings{kim2022rigorous,
  author    = {Kim, Siwon and Choi, Kukjin and Choi, Hyun-Soo and
               Lee, Byunghan and Yoon, Sungroh},
  title     = {Towards a Rigorous Evaluation of Time-Series Anomaly Detection},
  booktitle = {Proceedings of the AAAI Conference on Artificial Intelligence},
  volume    = {36},
  number    = {7},
  pages     = {7194--7201},
  year      = {2022}
}

@article{mehra1971innovations,
  author    = {Mehra, Raman K. and Peschon, John},
  title     = {An Innovations Approach to Fault Detection and Diagnosis in
               Dynamic Systems},
  journal   = {Automatica},
  volume    = {7},
  number    = {5},
  pages     = {637--640},
  year      = {1971}
}

@article{willsky1976survey,
  author    = {Willsky, Alan S.},
  title     = {A Survey of Design Methods for Failure Detection in Dynamic Systems},
  journal   = {Automatica},
  volume    = {12},
  number    = {6},
  pages     = {601--611},
  year      = {1976}
}

@book{kay1998detection,
  author    = {Kay, Steven M.},
  title     = {Fundamentals of Statistical Signal Processing, Volume II:
               Detection Theory},
  publisher = {Prentice Hall},
  year      = {1998}
}

@book{tartakovsky2014sequential,
  author    = {Tartakovsky, Alexander and Nikiforov, Igor and Basseville, Mich{\`e}le},
  title     = {Sequential Analysis: Hypothesis Testing and Changepoint Detection},
  publisher = {CRC Press},
  year      = {2014}
}

@book{cover2006elements,
  author    = {Cover, Thomas M. and Thomas, Joy A.},
  title     = {Elements of Information Theory},
  edition   = {2nd},
  publisher = {Wiley},
  year      = {2006}
}

@article{bai2017data,
  author    = {Bai, Cheng-Zong and Pasqualetti, Fabio and Gupta, Vijay},
  title     = {Data-injection Attacks in Stochastic Control Systems:
               Detectability and Performance Trade-offs},
  journal   = {Automatica},
  volume    = {82},
  pages     = {251--260},
  year      = {2017}
}

@article{guo2018worst,
  author    = {Guo, Ziyang and Shi, Dawei and Johansson, Karl Henrik and Shi, Ling},
  title     = {Worst-case Stealthy Innovation-based Linear Attack on Remote
               State Estimation},
  journal   = {Automatica},
  volume    = {89},
  pages     = {117--124},
  year      = {2018}
}

@article{kung2017performance,
  author    = {Kung, Enoch and Dey, Subhrakanti and Shi, Ling},
  title     = {The Performance and Limitations of $\epsilon$-Stealthy Attacks on
               Higher Order Systems},
  journal   = {IEEE Transactions on Automatic Control},
  volume    = {62},
  number    = {2},
  pages     = {941--947},
  year      = {2017}
}

@article{gama2014concept,
  author    = {Gama, Jo{\~a}o and {\v Z}liobait{\.e}, Indr{\.e} and Bifet, Albert and
               Pechenizkiy, Mykola and Bouchachia, Abdelhamid},
  title     = {A Survey on Concept Drift Adaptation},
  journal   = {ACM Computing Surveys},
  volume    = {46},
  number    = {4},
  pages     = {1--37},
  year      = {2014},
  publisher = {ACM},
  doi       = {10.1145/2523813}
}

@inproceedings{erba2020concealment,
  author    = {Erba, Alessandro and Taormina, Riccardo and Galelli, Stefano and
               Pogliani, Marcello and Carminati, Michele and Zanero, Stefano and
               Tippenhauer, Nils Ole},
  title     = {Constrained Concealment Attacks against Reconstruction-based
               Anomaly Detectors in Industrial Control Systems},
  booktitle = {Proceedings of the Annual Computer Security Applications
               Conference (ACSAC)},
  pages     = {480--495},
  year      = {2020},
  publisher = {ACM},
  doi       = {10.1145/3427228.3427660}
}

@inproceedings{kravchik2021poisoning,
  author    = {Kravchik, Moshe and Biggio, Battista and Shabtai, Asaf},
  title     = {Poisoning Attacks on Cyber Attack Detectors for Industrial
               Control Systems},
  booktitle = {Proceedings of the 36th Annual ACM Symposium on Applied
               Computing (SAC)},
  pages     = {116--125},
  year      = {2021},
  publisher = {ACM}
}

@article{mahalanobis1936distance,
  author    = {Mahalanobis, Prasanta Chandra},
  title     = {On the Generalised Distance in Statistics},
  journal   = {Proceedings of the National Institute of Sciences of India},
  volume    = {2},
  number    = {1},
  pages     = {49--55},
  year      = {1936}
}

@article{murguia2019model,
  author    = {Murguia, Carlos and Ruths, Justin},
  title     = {On Model-Based Detectors for Linear Time-Invariant Stochastic
               Systems Under Sensor Attacks},
  journal   = {IET Control Theory \& Applications},
  volume    = {13},
  number    = {8},
  pages     = {1051--1061},
  year      = {2019},
  doi       = {10.1049/iet-cta.2018.5970}
}

@article{ledoit2004wellconditioned,
  author    = {Ledoit, Olivier and Wolf, Michael},
  title     = {A Well-Conditioned Estimator for Large-Dimensional Covariance
               Matrices},
  journal   = {Journal of Multivariate Analysis},
  volume    = {88},
  number    = {2},
  pages     = {365--411},
  year      = {2004},
  publisher = {Elsevier},
  doi       = {10.1016/S0047-259X(03)00096-4}
}

@article{chandola2009anomaly,
  author    = {Chandola, Varun and Banerjee, Arindam and Kumar, Vipin},
  title     = {Anomaly Detection: A Survey},
  journal   = {ACM Computing Surveys},
  volume    = {41},
  number    = {3},
  pages     = {1--58},
  year      = {2009},
  publisher = {ACM},
  doi       = {10.1145/1541880.1541882}
}

@article{fawcett2006roc,
  author    = {Fawcett, Tom},
  title     = {An Introduction to {ROC} Analysis},
  journal   = {Pattern Recognition Letters},
  volume    = {27},
  number    = {8},
  pages     = {861--874},
  year      = {2006},
  publisher = {Elsevier},
  doi       = {10.1016/j.patrec.2005.10.010}
}

@article{mcclish1989partial,
  author    = {McClish, Donna Katzman},
  title     = {Analyzing a Portion of the {ROC} Curve},
  journal   = {Medical Decision Making},
  volume    = {9},
  number    = {3},
  pages     = {190--195},
  year      = {1989},
  doi       = {10.1177/0272989X8900900307}
}

@article{lamberts2023sok,
  author    = {Lamberts, Olav and Wolsing, Konrad and Wagner, Eric and
               Pennekamp, Jan and Bauer, Jan and Wehrle, Klaus and Henze, Martin},
  title     = {{SoK}: Evaluations in Industrial Intrusion Detection Research},
  journal   = {Journal of Systems Research},
  volume    = {3},
  number    = {1},
  year      = {2023},
  doi       = {10.5070/SR33162445}
}

@inproceedings{tatbul2018precision,
  author    = {Tatbul, Nesime and Lee, Tae Jun and Zdonik, Stan and
               Alam, Mejbah and Gottschlich, Justin},
  title     = {Precision and Recall for Time Series},
  booktitle = {Advances in Neural Information Processing Systems},
  volume    = {31},
  year      = {2018}
}

@article{darban2024deep,
  author    = {Darban, Zahra Zamanzadeh and Webb, Geoffrey I. and
               Pan, Shirui and Aggarwal, Charu C. and Salehi, Mahsa},
  title     = {Deep Learning for Time Series Anomaly Detection: A Survey},
  journal   = {ACM Computing Surveys},
  volume    = {57},
  number    = {1},
  pages     = {1--42},
  year      = {2024},
  publisher = {ACM},
  doi       = {10.1145/3691338}
}

@article{paparrizos2022volume,
  author    = {Paparrizos, John and Boniol, Paul and Palpanas, Themis and
               Tsay, Ruey S. and Elmore, Aaron J. and Franklin, Michael J.},
  title     = {Volume Under the Surface: A New Accuracy Evaluation Measure for
               Time-Series Anomaly Detection},
  journal   = {Proceedings of the VLDB Endowment},
  volume    = {15},
  number    = {11},
  pages     = {2774--2787},
  year      = {2022},
  doi       = {10.14778/3551793.3551830}
}

@article{kreso2026information,
  author    = {Kreso, Inda and Tarif, Mehran and Moradi, Fatemeh and
               Khazrak, Iman and Rezaee, Mostafa M. and Homaei, Mohammadhossein},
  title     = {Information-Theoretic Digital Twins for Stealthy Attack Detection
               in Industrial Control Systems: A Closed-Form {KL} Divergence
               Approach},
  journal   = {arXiv preprint arXiv:2603.01621},
  year      = {2026}
}

@article{zimek2012survey,
  author    = {Zimek, Arthur and Schubert, Erich and Kriegel, Hans-Peter},
  title     = {A Survey on Unsupervised Outlier Detection in High-Dimensional Numerical Data},
  journal   = {Statistical Analysis and Data Mining},
  volume    = {5},
  number    = {5},
  pages     = {363--387},
  year      = {2012},
  doi       = {10.1002/sam.11161}
}

\appendices

\section{Proofs for Section~\ref{sec:framework}}\label{app:proofs}

Throughout, $r$ denotes a residual with mean $\mu_P$ and covariance $S_P$ under
a distribution $P$, and $P_0=\mathcal{N}(\mu_0,S_0)$ is the train-normal
reference. We use the standard expectation formula for a quadratic form: for a
random vector $x$ with mean $\mu$ and covariance $\Sigma$ and a fixed matrix
$A$,
\begin{equation}\label{eq:quadform}
  \mathbb{E}\!\left[x^{\top}Ax\right]
  = \operatorname{tr}(A\Sigma)+\mu^{\top}A\mu .
\end{equation}

\smallskip
\noindent\emph{Proof of Proposition~\ref{prop:baseline}.}
Under $H_0$, $r\sim\mathcal{N}(\mu_0,S_0)$. Therefore
$z=S_0^{-1/2}(r-\mu_0)\sim\mathcal{N}(0,I_p)$, and
$2e_t=\lVert z\rVert^2$ is the sum of $p$ squared standard normal variables.
Thus $2e_t\sim\chi^2_p$, whose mean is $p$ and variance is $2p$. Dividing by two
gives $\mathbb{E}[e_t\mid H_0]=p/2$ and
$\mathrm{Var}(e_t\mid H_0)=p/2$. \qed

\smallskip
\noindent\emph{Proof of Theorem~\ref{thm:identity}.}
Let $\delta=\mu_P-\mu_0$. Under $P$, the centered residual $r-\mu_0$ has mean
$\delta$ and covariance $S_P$. Applying \eqref{eq:quadform} with $A=S_0^{-1}$ to
\eqref{eq:energy} gives
\begin{equation}\label{eq:proof-energy}
  \mathbb{E}[e_t\mid P]
  =\tfrac12\!\left(\operatorname{tr}(S_0^{-1}S_P)
  +\delta^{\top}S_0^{-1}\delta\right).
\end{equation}
The KL divergence between the moment-matched Gaussian
$P_G=\mathcal{N}(\mu_P,S_P)$ and the train-normal reference
$P_0=\mathcal{N}(\mu_0,S_0)$ has the standard closed form~\cite{cover2006elements}
\begin{equation}\label{eq:proof-kl}
  D_{\mathrm{KL}}(P_G\Vert P_0)
  =\tfrac12\!\left(\operatorname{tr}(S_0^{-1}S_P)
  +\delta^{\top}S_0^{-1}\delta-p+
  \log\frac{\det S_0}{\det S_P}\right).
\end{equation}
Subtracting \eqref{eq:proof-kl} from \eqref{eq:proof-energy} cancels the trace
and mean-shift terms and leaves
\[
  \mathbb{E}[e_t\mid P]-D_{\mathrm{KL}}(P_G\Vert P_0)
  =\frac{p}{2}+\tfrac12\log\frac{\det S_P}{\det S_0}.
\]
Rearranging gives \eqref{eq:identity}. If $S_P=S_0$, then
\eqref{eq:proof-energy} becomes
\[
  \mathbb{E}[e_t\mid P]
  = \frac{p}{2}+\tfrac12\delta^{\top}S_0^{-1}\delta,
\]
and therefore
\[
  \mathbb{E}[e_t\mid P]-\frac{p}{2}
  =\tfrac12\delta^{\top}S_0^{-1}\delta
  =D_{\mathrm{KL}}(P_G\Vert P_0),
\]
which proves \eqref{eq:identity-mean}. \qed

\smallskip
\noindent\emph{Moment-matched KL lower bound.}
When the KL terms are finite, the moment-matched Gaussian KL is a lower bound on
the true non-Gaussian KL. For any $P$ with mean $\mu_P$ and covariance $S_P$,
\begin{equation}\label{eq:pythagoras}
  D_{\mathrm{KL}}(P\Vert P_0)
  =D_{\mathrm{KL}}(P\Vert P_G)+D_{\mathrm{KL}}(P_G\Vert P_0).
\end{equation}
The reason is that $\log(P_G/P_0)$ is a quadratic function of the residual, and
$P$ and $P_G$ share the first two moments. Its expectation is therefore the same
under $P$ and $P_G$. Since $D_{\mathrm{KL}}(P\Vert P_G)\ge0$, the
moment-matched value $D_{\mathrm{KL}}(P_G\Vert P_0)$ lower-bounds
$D_{\mathrm{KL}}(P\Vert P_0)$. This lower-bound statement applies to the plug-in
Gaussian KL itself. Derived quantities such as $D_{\mathrm{eff}}$ and SDR are
signed or ratio summaries and should not be read as KL divergences. \qed

\section{Residual Extraction and Attack-Segment Alignment}\label{app:extraction_alignment}

The framework operates on a residual stream and is otherwise agnostic to the
detector. For each detector we export one residual vector per time step, fit the
train-normal reference, and compute energy as in Section~\ref{sec:protocol}. The
only detector-specific step is producing that residual stream.

\emph{Observation-space detectors.} For GDN and FuSAGNet~\cite{deng2021gdn,han2022fusagnet},
the residual at time $t$ is the per-channel one-step prediction error over the
full observation vector. For TranAD~\cite{tuli2022tranad}, it is the per-channel
reconstruction error. In all three cases the residual dimension equals the
number of observed channels.

\emph{State observer.} NSIBF~\cite{feng2021nsibf} runs an unscented Kalman filter
over a learned state-space model and treats actuators as known control inputs.
We score the lag-0 innovation, restricted to the sensor subspace. The resulting
residual dimensions are 25, 67, and 27 on SWaT, WADI, and HAI, respectively.
NSIBF also uses its native temporal resolution, which creates shorter processed
attack windows than the full observation streams. We therefore compare rank and
separation metrics directly, but interpret integrated per-attack quantities with
the alignment caveat below.

\emph{Invariant detector.} GeCo~\cite{wolsing2025geco} learns algebraic
invariants among signals. We use the signed violation of each learned invariant
as the residual. The residual space is therefore the invariant set rather than
the sensor set, and its effective dimension can be much smaller than the nominal
number of invariants.

The per-attack rows in Section~\ref{sec:res_dataset} use processed attack
segments on each detector's residual timeline. This is necessary because
detectors use different window lengths, sampling rates, and preprocessing. A
processed segment should therefore not be read automatically as the same object
as an official attack identifier. For SWaT~\cite{mathur2016swat}, most processed
segments align with the official attack list after removing entries documented
as having no physical impact, but adjacent official attacks can merge into one
processed segment. For WADI~\cite{ahmed2017wadi}, timestamp and sampling
conventions can change the processed labels, so we describe WADI examples as
processed segments rather than official attack numbers. For HAI~\cite{shin2021hai},
the processed segment count matches the public attack count, but we do not use
physical target explanations unless the target mapping is explicit. This policy
keeps the case studies conservative: they illustrate metric patterns, not a
complete official attack taxonomy.

\section{Supplementary Within-Detector Results}\label{app:supp_within}

Table~\ref{tab:full} is the companion to Table~\ref{tab:main}. It reports
additional rank metrics, the controlled Stage-2 p99 alarm metrics, and
train-normal data-quality indicators. AP is average precision computed from raw
residual-energy scores. pAUC@5\% is the normalized partial ROC area for
FPR~$\le 0.05$. We include pAUC@5\% only as a companion to the stricter
pAUC@1\% in the main table: pAUC@1\% better matches very low false-alarm CPS
operation, while pAUC@5\% shows whether separation appears when the low-FPR band
is slightly widened. The pAUC values are chance-corrected within the specified
FPR band, so zero means chance-level ranking in that band and a small negative
value means worse-than-chance ordering there. This is why TranAD on SWaT has
pAUC@5\% of $-0.003$ even though its full ROC-AUC remains above chance.

F1@p99, precision@p99, and recall@p99 are computed after thresholding residual
energy at the 99th percentile of train-normal energy. FP-seg is the number of
contiguous false-positive alarm segments. Hits is the number of processed attack
segments with at least one alarm. max-$z$ is the largest whitened train-normal
residual magnitude and is used to flag extreme reference points. Excess kurtosis
is computed on whitened train-normal residuals and indicates heavy-tailed
behavior. Large max-$z$ or kurtosis values do not change the rank metrics, but
they explain why some mean drift values in Table~\ref{tab:main} are marked with
\textsuperscript{\dag} and should be read together with the robust companion.

\begin{table*}[t]
\centering
\caption{Supplementary within-detector metrics. AP and pAUC@5\% are computed
from raw residual-energy scores. F1@p99, Precision@p99, and Recall@p99 are
computed after applying the shared train-normal p99 threshold. FP-seg counts
contiguous false-positive alarm segments, and Hits counts processed attack
segments with at least one alarm. max-$z$ and excess kurtosis summarize
train-normal tail behavior.}
\label{tab:full}
\scriptsize
\setlength{\tabcolsep}{3.5pt}
\begin{tabular}{@{}lrrrrrrrrr@{}}
\toprule
Detector & AP & pAUC@5\% & F1@p99 & Prec. & Rec. & FP-seg & Hits & max-$z$ & kurt. \\
\midrule
\multicolumn{10}{@{}l}{\textit{SWaT}}\\
GeCo     & 0.754 & 0.662 & 0.764 & 0.902 & 0.663 & 2520 & 31 & $8.9{\times}10^{4}$ & --- \\
FuSAGNet & 0.719 & 0.619 & 0.740 & 0.898 & 0.630 & 92 & 8 & $1.3{\times}10^{3}$ & 148 \\
NSIBF    & 0.700 & 0.610 & 0.727 & 0.896 & 0.612 & 12 & 7 & 38.7 & 42.7 \\
GDN      & 0.632 & 0.532 & 0.651 & 0.878 & 0.518 & 39 & 5 & $9.6{\times}10^{4}$ & 620 \\
TranAD   & 0.246 & $-0.003$ & 0.025 & 0.157 & 0.014 & 36 & 4 & $1.0{\times}10^{4}$ & --- \\
\midrule
\multicolumn{10}{@{}l}{\textit{WADI}}\\
NSIBF    & 0.442 & 0.326 & 0.440 & 0.705 & 0.320 & 7 & 3 & 113 & 45.2 \\
GeCo     & 0.078 & 0.008 & 0.025 & 0.082 & 0.015 & 1610 & 13 & 526 & $6.4{\times}10^{3}$ \\
TranAD   & 0.344 & 0.281 & 0.407 & 0.645 & 0.297 & 112 & 8 & $2.7{\times}10^{8}$ & 261 \\
GDN      & 0.200 & 0.132 & 0.233 & 0.487 & 0.153 & 79 & 3 & $8.1{\times}10^{4}$ & 129 \\
FuSAGNet & 0.206 & 0.134 & 0.229 & 0.483 & 0.150 & 145 & 5 & $5.3{\times}10^{4}$ & 3.5 \\
\midrule
\multicolumn{10}{@{}l}{\textit{HAI}}\\
TranAD   & 0.594 & 0.707 & 0.671 & 0.622 & 0.729 & 137 & 48 & $4.8{\times}10^{7}$ & 795 \\
FuSAGNet & 0.651 & 0.711 & 0.669 & 0.629 & 0.713 & 106 & 46 & $2.8{\times}10^{3}$ & 434 \\
GDN      & 0.568 & 0.682 & 0.633 & 0.610 & 0.658 & 74 & 45 & $1.9{\times}10^{5}$ & 964 \\
GeCo     & 0.553 & 0.595 & 0.554 & 0.557 & 0.552 & 2368 & 50 & $2.0{\times}10^{9}$ & $4.0{\times}10^{3}$ \\
NSIBF    & 0.559 & 0.536 & 0.584 & 0.660 & 0.525 & 31 & 33 & 177 & 146 \\
\bottomrule
\end{tabular}
\end{table*}

\section{Supplementary Dataset- and Attack-Level Results}\label{app:attack_cases}

The complete per-attack metrics are provided in the supplementary artifact as
\texttt{L2\_per\_attack\_metrics.csv}. For each processed attack segment and
detector, the file reports the segment length, plug-in KL estimate,
$D_{\mathrm{FULL}}$, $D_{\mathrm{MS}}$, $D_{\mathrm{eff}}$, SDR, per-attack AUC,
p99 hit, p99 coverage, and time to first alarm. The main text uses four
representative cases because its goal is to explain why a missed alarm can have
different meanings, not to build a complete attack taxonomy.

Table~\ref{tab:attack_case_values} gives the numeric values behind
Table~\ref{tab:attack_explanations}. Per-attack AUC is computed for one
processed attack segment against matched normal comparison points and should not
be confused with detector-level ROC-AUC in Table~\ref{tab:main}. p99 coverage is
the fraction of points inside the processed segment that exceed the shared
train-normal p99 threshold. $D_{\mathrm{eff}}$ is signed: negative values mean
that attack-window energy is below the train--test drift floor. SDR is a
descriptive ratio and is most meaningful when the drift denominator is positive
and not close to zero. Rows that summarize multiple detector rows report a
median value or a range, as indicated.

\begin{table*}[t]
\centering
\caption{Numeric values behind the representative attack-level explanations in
Table~\ref{tab:attack_explanations}. ``Observation-space range'' denotes the
full-observation residual detectors GDN, FuSAGNet, and TranAD. Rows labeled as
medians summarize the corresponding detector rows rather than reporting a
separate detector.}
\label{tab:attack_case_values}
\footnotesize
\setlength{\tabcolsep}{3.5pt}
\begin{tabular}{@{}p{2.1cm}p{2.25cm}p{1.6cm}p{1.25cm}p{1.55cm}p{1.35cm}p{1.45cm}p{3.0cm}@{}}
\toprule
Dataset / segment & Detector rows & $W$ & Per-attack AUC & p99 cov. & SDR & $D_{\mathrm{eff}}$ & Interpretation \\
\midrule
SWaT segment 3 & all detectors (median) & 8--390 & 0.059 & 0.00 & $-3.0{\times}10^{-5}$ & $-8.8{\times}10^{3}$ & Weak residual evidence; official matched row reports ``No impact.'' \\
\addlinespace[2pt]
WADI segment 3 & NSIBF & 5 & 0.996 & 0.80 & 11.15 & $7.1{\times}10^{2}$ & Local evidence visible in the sensor-innovation subspace. \\
\addlinespace[2pt]
WADI segment 3 & Observation-space range & 85--86 & 0.440--0.442 & 0.00 & $3.3{\times}10^{-8}$--$4.3{\times}10^{-7}$ & $-1.9{\times}10^{11}$--$-9.4{\times}10^{8}$ & Same segment is weak under full-observation pooled scores. \\
\addlinespace[2pt]
WADI segment 13 & Observation-space range & 63--64 & 0.778--0.784 & 0.00 & 0.007--1.804 & $-1.9{\times}10^{11}$--$2.1{\times}10^{9}$ & Per-attack ranking exists, but the p99 threshold gives no coverage. \\
\bottomrule
\end{tabular}
\end{table*}

\section{Supplementary Cross-Testbed Results}\label{app:cross_testbed}

Table~\ref{tab:cross_dataset_rank} reports cross-testbed ranking stability for
three detector-level metrics. Each row compares the ranking of the same five
detectors on two testbeds using Spearman correlation. The $p$-values are shown
only for completeness; with five detectors, the correlations should be read as
descriptive evidence rather than as strong statistical tests. The negative or
weak ROC-AUC correlations support Finding~6: a detector ranking on one CPS
benchmark does not reliably transfer to another.

\begin{table}[t]
\centering
\caption{Supplementary cross-testbed ranking stability. $n$ is the number of
detectors shared by the two testbeds, $\rho$ is Spearman rank correlation, and
$p$ is the associated two-sided test value.}
\label{tab:cross_dataset_rank}
\footnotesize
\setlength{\tabcolsep}{4pt}
\begin{tabular}{@{}llrrr@{}}
\toprule
Metric & Testbeds & $n$ & $\rho$ & $p$ \\
\midrule
ROC-AUC & HAI--SWaT & 5 & $-0.50$ & 0.391 \\
ROC-AUC & HAI--WADI & 5 & $-0.70$ & 0.188 \\
ROC-AUC & SWaT--WADI & 5 & 0.10 & 0.873 \\
\midrule
med. $D_{\mathrm{FULL}}$ & HAI--SWaT & 5 & 0.90 & 0.037 \\
med. $D_{\mathrm{FULL}}$ & HAI--WADI & 5 & 0.10 & 0.873 \\
med. $D_{\mathrm{FULL}}$ & SWaT--WADI & 5 & 0.30 & 0.624 \\
\midrule
$P_{\mathrm{FULL}}$ & HAI--SWaT & 5 & 0.50 & 0.391 \\
$P_{\mathrm{FULL}}$ & HAI--WADI & 5 & $-0.20$ & 0.747 \\
$P_{\mathrm{FULL}}$ & SWaT--WADI & 5 & 0.10 & 0.873 \\
\bottomrule
\end{tabular}
\end{table}

\end{document}